\def\dOi{11(1:8)2015}
\theoremstyle{plain}\newtheorem{lemma}[thm]{Lemma}
\newcommand{\R}{{\Bbb R}}
\newcommand{\N}{{\Bbb N}}
\newcommand{\Q}{{\Bbb Q}}
\newcommand{\Z}{{\Bbb Z}}
\newcommand{\Ct}{\mathsf{Ct}}
\newcommand{\rinf}{\rightarrow \infty}
\begin{document}
\title[The extensional realizability model of continuous functionals]{The extensional realizability model of continuous functionals and three weakly non-constructive classical theorems }
\author[D.Normann]{Dag Normann}
\address{Department of Mathematics, The University 
of Oslo, P.O. Box 1053, Blindern N-0316 Oslo, Norway} \email{dnormann@math.uio.no} 
\keywords{Constructive mathematics, Realizability model, Riemann permutation theorem, Anti-Specker spaces} 

\begin{abstract}We investigate wether three statements in analysis, that can be proved classically, are realizable in the realizability model of extensional continuous functionals induced by Kleene's second model $K_2$. We prove that a formulation of the Riemann Permutation Theorem as well as the statement that all partially Cauchy sequences are Cauchy cannot be realized in this model, while the statement that the product of two anti-Specker spaces is anti-Specker can be realized. \end{abstract}  
\maketitle
\section{Introduction}\label{introduction}
\subsection{Discussion}
The background motivation for the results obtained in this paper is the desire to understand the relative strength of classical theorems in mathematics in a constructive context. Some statements will, when added to a constructive theory, transform the theory to a classical one. We consider a statement to be weakly non-constructive when this is not the case. Weakly non-constructive statements may be identified as such when they are satisfied by models of constructive mathematics not satisfying classical logic, but are not constructively provable themselves.

We will consider three weakly non-constructive statements and the interpretation of them in the extensional realizability model induced by Kleene's second model $K_2$, see below or the main text for a discussion of $K_2$. For two of the statements we will prove that they fail in this model. The consequence will be that they cannot be proved from other weakly non-constructive principles true in the same model modulo e.g. $HA^\omega$ or extensional Martin-L\"of type theory with one universe. The third statement is classically equivalent to the finite dimensional Tychonov theorem for metric spaces. We will prove a technical theorem showing that, in a certain sense, this statement is true in the model. We return to a precise formulation of this statement below, and discuss how we interpret it in the model.

Ishihara introduced, implicitly in  \cite{Ishihara1} and explicitly in  \cite{Ishihara2}, the principle known as BD-$\N$: Let $\{a_n\}_{n \in \N}$ be a sequence of natural numbers, and assume that for all functions $f:\N \rightarrow \N$ there is an $n$, depending on $f$, such that
$\forall k \geq n (a_{f(k)} < k)$. Then $\{a_n\}_{n \in \N}$ is bounded. It is not hard to prove BD-$\N$ classically. BD-$\N$ is an example of a weakly non-constructive statement, but it turned out to be rather strong as such. Berger and Bridges \cite{BD1,BD2} proved that the Riemann Permutation Theorem is a consequence of BD-$\N$ and Bridges \cite{Bridges} proved that BD-$\N$ implies that the product of two anti-Specker spaces is anti-Specker (see the formal definition of anti-Specker spaces in Subsection \ref{aS}). Fred Richman (unpublished) introduced the concept of partially Cauchy sequences and proved that, as a consequence of BD-$\N$, all partially Cauchy sequences are Cauchy.

Lubarsky and Diener \cite{LD} showed that neither of these statements imply BD-$\N$ and that the closure under products of anti-Specker spaces is not outright provable constructively.  We refer to \cite{LD} for a further discussion of these results.

Kleene \cite{Kleene.59II} defined a typed structure of \emph{countable functionals}, and to this end he introduced a way to let functions $f:\N \rightarrow \N$ code partial continuous functionals $F:\N^\N \rightarrow \N$. This has been modified to a partial operator $f \bullet g$ of type $(\N^\N)^2 \rightarrow \N^\N$, an operator that organizes $\N^\N$ to a \emph{partial combinatorial algebra}. This algebra is known as Kleene's second model $K_2$. Partial combinatorial algebras will in turn generate realizability models for constructive mathematics, and it is the extensional realizability model based on $K_2$ and \emph{namings} or \emph{modest assemblies} that we will work with here.

The paper is self contained in the sense that we will introduce the mentioned realizability model to the extent needed to make our technical results precise. Thus one does not need to be familiar with constructive mathematics as such in order to read the paper. \emph{Constructive mathematics} is not a precise concept, anyhow. Our results will shed some light on formal non-classical logics accepting a realizability model based on $K_2$ and compatible with our interpretations of the statements we consider.

Even though we are investigating aspects of constructive mathematics, we will use full set theory with classical logic in our proofs. In particular, we will use proofs by contradiction in order to prove two of our results.

\subsection{Main Results}
We will prove: 
\begin{enumerate}
\item The extensional realizability model induced by $K_2$ does realize that the product of two anti-Specker spaces is anti-Specker.
\item The extensional realizability model induced by $K_2$ does not realize that all partially Cauchy sequences are Cauchy.
\item The Riemann Permutation Theorem is not realized by the extensional realizability model induced by $K_2$.
\end{enumerate}
\subsection{Outline}
The paper is organized as follows:
\begin{itemize}
\item[-] In Section \ref{background} we introduce Kleene's second model to the extent we need it, and we turn the three main results into precise mathematical statements.

\item[-]In Section \ref{Sec3} we prove that $K_2$ will realize that a metric space $X$ is anti-Specker if and only if $X$ is compact, we introduce a kind of realizer of compactness, a \emph{compactness base}, and we use this to construct a realizer of anti-Speckerhood of a product from realizers of anti-Speckerhood of the two factors. It is in this form that we prove result 1.

\item[-]In Section \ref{Sec4} we prove results 2 and 3, and in Section \ref{Sec5} we summarize what we have obtained.
\end{itemize}

\subsection{Acknowledgements}
In early 2011, Thomas Streicher made me aware of the problem concerning the realizability of the Riemann Permutation Theorem in the extensional $K_2$-model, and I worked unsuccessfully on this problem for some time. Then, in May/June 2012 Robert Lubarsky asked me the same question for partially Cauchy and for the closure under products of the class of anti-Specker spaces. This was when we were both visiting fellows at the \emph{Isaac Newton Institute for Mathematical Sciences} in Cambridge, UK, participating in the program \emph{Semantics and Syntax: A Legacy of Alan Turing}. I am grateful to the Newton Institute for inviting me and letting me take part in this program.

During the preparation of this paper, I benefitted from further discussions with Robert Lubarsky and Thomas Streicher, and their comments on two informal working notes were most helpful. Thomas Streicher and Mart\'in Escard\'o  gave helpful comments on the choice of notation and the exposition in  a first draft of the paper version.

Two anonymous referees gave valuable feedback, for which I am grateful. Their comments were very useful in the preparation of the final version.
\section{Background}\label{background}
\subsection{Kleene's Second model}
Kleene's second model $K_2$ is an organization of the \emph{Baire space} $\N^\N$ into a partial applicative structure, actually a partial combinatorial algebra, and has its origin in Kleene \cite{Kleene.59II}.

We let $\langle \cdot , \ldots , \cdot \rangle$ be a standard sequence numbering. 
\begin{itemize}
\item[-]If $f:\N \rightarrow \N$ and $n \in \N$, we let $\bar f(n) = \langle f(0) , \ldots , f(n-1)\rangle$, where $\bar f(0)$ is the sequence number of the empty sequence.
\item[-]
If $g:\N \rightarrow \N$ and $n \in \N$, we let $\langle n,g\rangle \in \N^\N$ be defined by $$\langle n,g\rangle(k) = \left \{ \begin{array}{ccc} n& {\rm if} & k=0\\ g(k-1)& {\rm if} &k > 0\end{array} \right .$$
\item[-] Given $f$ and $g$ in $\N^\N$ we let $f * g = f(\bar g(n))-1$ for the least $n$ such that $f(\bar g(n)) > 0$ if there is one such $n \in \N$.
We let $f*g$ be undefined if there is no such $n$.
\item[-] We let $f \bullet g = \lambda k.f*\langle k, g\rangle$ when this function is total.
We let $f \bullet g$ be undefined otherwise.
\end{itemize}
\begin{defi}\label{definition2.1}\hfill
\begin{enumerate}[label=\alph*)]
\item A \emph{naming} of a set $X$ will be a pair $(A,\nu)$ where $A \subseteq \N^\N$ and $\nu:A \rightarrow X$ is onto. We say that $f$ is a \emph{name} of $\nu(f)$.
\item If $(A,\nu)$ is a naming of $X$, the \emph{induced topology} on $X$ is the finest topology making $\nu$ continuous.
\item If $(A_X,\nu_X)$ is a naming of $X$, $(A_Y,\nu_Y)$ is a naming of $Y$, $ \phi:X \rightarrow Y$ and $f \in \N^\N$, we let $f$ be a name of $\phi$ if  whenever $g \in A_X$ we have that $f \bullet g \in A_Y$ and
$\nu_Y(f \bullet g) = \phi(\nu_X(g))$. We then also say that $f$ is \emph{tracking} $\phi$.
\item Given the notation from c), we let $Z = X \rightarrow Y$ be the set of functions with names and $(A_Z,\nu_Z)$ the corresponding naming.
\end{enumerate}
\end{defi}

\noindent If $X$ is a structured set, the structure must be reflected in the naming. In this paper we will be concerned with namings of metric spaces, and we will see later how to deal with this.

There are of course many namings of the same set or structure. In order to express that constructions are uniform in the naming, we need the concept of a \emph{reduction} of one naming to another:
\begin{defi}Let $(A,\nu)$ and $(B,\eta)$ be two namings of the set $X$, and let $h:\N \rightarrow \N$.

We say that $h$ is a \emph{reduction} of $(A,\nu)$ to $(B,\eta)$ if $h \bullet g \in B$ for all $f \in A$, and then $$\nu(f) = \eta(h \bullet f).$$

\end{defi}
\begin{rem}\hfill
\begin{enumerate}[label=\alph*)]
\item[a)]The topology on $X$ defined in Definition \ref{definition2.1} b) is homeomorphic to the quotient topology on the set of equivalence classes in $A$ under the equality relation for $\nu$.
\item[b)] \emph{Naming} is a term borrowed from the TTE-approach to computational analysis initiated by Weihrauch, see \cite{Weihrauch} for a book exposition.
\item[c)] Another term for a naming is \emph{a modest assembly}, where we write
$$f \Vdash x$$
instead of $\nu(f) = x$. Then $f$ is often called a realizer of $x$ or a representation of $f$. We will not adopt this terminology, using the word ``realizer" in a slightly different way.
\item[d)]
Topological spaces that are quotients of namings are also known as {\bf QCB}-spaces, see Battenfield, Schr\"oder and Simpson \cite{BSS} for the definition and  Schr\"oder \cite{Schroder} for the characterization of {\bf QCB} via namings. 
\item[e)] When $Z$ is a set with a naming $\rho$, $Z$ is actually some space of functions, i.~e.~a subset of a set  $X \rightarrow Y$, and $\rho$ is defined from namings $\nu$ and $\eta$ for $X$ and $Y$ resp., then the elements of the domain of $\rho$ are often called \emph{associates}, in line with Kleene's \cite{Kleene.59II} original use of this term. We will stick to using ``names'' here.
\end{enumerate}
\end{rem}
\begin{exa} Let $\nu_\N(f) = f(0)$. This gives a naming of $\N$, and seemingly we may use $A_\N = \N^\N$. However, using many names for each integer may generate unwanted problems, so we let $f \in A_\N$ if $f(k+1) = 0$ for all $k$.

Then every finite type $t$ over the base type $\N$ will have a canonical interpretation in $K_2$, and indeed, the interpretation of $t$ will be the Kleene-Kreisel functionals $\Ct(t)$ of type $t$.
\end{exa}

Following Kleene's original definition, we may also use $\N$ for naming purposes, letting $\langle \N,id\rangle$ be a naming of $\N$. For the sake of notational simplicity we will adopt this, and use the operator $(f,g) \mapsto f*g$ when $f$ is the name of a function with arguments having function names $g$, but values in $\N$.

$K_2$ will be used to form a \emph{realizability model}  for second order arithmetic, and we will be interested in the extensional version. We will not give the full details of the construction of this model, but we will be precise when it comes to the three statements we are interested in.

The philosophy behind a realizability model is that the truth of a statement $\Phi$ may be realized by some object $\phi$. An implication $\Phi \Rightarrow \Psi$ will be realized by a function mapping a realizer for $\Phi$ to a realizer for $\Psi$. In our context, the sets of realizers of $\Phi$ and $\Psi$ will be subject to suitable namings, and then a realizer for the implication will be a function with a name in the sense of Kleene's second model.

We will not be precise about which formal language we use to express our statements, but it will be sufficiently typed. There will be types representing $\N$ and $\N^\N$, and this will suffice to express two of our statements. The third statement is a general statement about metric spaces, and we will simply explain how this is to be understood in the present context. Thus, the reader is not assumed to be familiar with constructions of realizability models in general.

A closed formula with parameters from $\N$ or $\N^\N$ will be realized by a set of functions or functionals, and the formula is \emph{true in the model} if the set of realizers is nonempty. 
\begin{itemize}
\item[-]A true quantifier free formula will be realized by the list of parameters, while a false quantifier free formula will have no realizers.
\item[-] A purely universal formula that is true will also be realized by the parameters appearing in the formula. 
\item[-] An existential formula will be realized by a pair consisting of a witness to the truth of the formula and a realizer of this truth.
\item[-] A universal formula will be realized by a function in the model mapping any interpretation of the variable in question to a realizer of the corresponding instance. It is here, and in the definition of realizers of implications,  we must be precise in  what we mean by a function in the model.
\item[-] Disjunctions are realized by pairs consisting of one of the disjuncts and a realizer of the chosen one, and conjunctions will be realized by pairs of realizers in the obvious way.
\end{itemize}
There are two natural ways we may use $K_2$ to form a realizability model for second order arithmetic, the \emph{intensional} one and the \emph{extensional} one. In the intensional model we let the realizers be functions in $\N^\N$, and when we need realizers that map functions to realizers or realizers to realizers, we use the application operator in $K_2$ to interpret a function as a partial functional. We will not be concerned with the intensional model here, except in one example.

In the extensional model, we will use true functions as  realizers of a formula and $K_2$ indirectly via a naming of the realizers as described above. Let us give an example of what we mean:
\begin{exa}Consider the statement
$$\forall f \exists n (f(n) \leq f(n+1) \vee (n > 0 \wedge f(n-1) > f(n).$$ 
Classically, a realizer will be a functional $F$ such that for all functions $f$, either $f(F(f)) \geq f(F(f)+1)$ or $F(f) > 0$ and $f(F(f) - 1) > f(F(f))))$.

There are many  such functionals $F$, and the continuous ones are definable within $K_2$. We will use them as extensional realizers. A function $g$  will be a name of  the realizer $F$  of the statement if $g * f = F(f)$ for all $f$. 

(If we were to be truly faithful to the idea of realizers, $F(f)$ should not only give information about an $n$ that satisfies the disjunction, but also of which disjunct that is satisfied by $n$.)
\end{exa}
Ishihara \cite{Ishihara1,Ishihara2} introduced the following principle known as BD-$\N$ , strong enough to prove all the three statements we will consider in this paper:
\begin{exa} We let $f$, $g$ and $h$ range over $\N^\N$ and let $n$, $m$ etc. range over $\N$. BD-$\N$ is  the following implication
$$\forall g(\forall f \exists n \forall k \geq n (g(f(k)) < k) \Rightarrow \exists n \forall k (g(k) < n)).$$
An intensional realizer of the assumption
$$\forall f \exists n \forall k \geq n (g(f(k)) < k)$$ will be a function $h$ such that $$\forall f\forall k \geq (h * f)(g(f(k)) < k).$$
A realizer for the conclusion will simply be a number $n$ such that $\forall k (g(k) < n)$, i.~e.~ an upper bound of the function $g$.

It is easy to find an upper bound like this from $g$, from the identity function $f = {\rm id}$ and from an intensional realizer $h$ of the assumption, as the following argument shows: 

There will be numbers $t$ and $n$ such that $$h(\bar f(t)) = n+1.$$
The significance is that for all $f'$ extending $\bar f(t)$ we have 
$$\forall k \geq n (g(f'(k)) < k).$$
Let $n_0 = \max\{t,n\}$.
For  $m \in \N$, let $f_m$ be an extension of $\bar f(t)$ such that $f_m(n_0) = m$.
We then have that $$g(m) = g(f_m(n_0)) < n_0.$$
Thus $n_0$ is an upper bound for $g$.
This upper bound depends on $h$, and not only on the functional defined from $h$, and indeed, this has to be the case.

 Lietz and Streicher \cite{LS} showed that we cannot do this in a continuous way, depending only of the functional encoded by $h$. We will give a direct proof of this as an example of how one may argue about this model.

 Assume that we for some continuous $M$ defined on the set of all pairs $(F,g)$, where $F$ is continuous  of type 2 and
$$(\ast)\:\: \forall f \forall k \geq F(f)(g(f(k)) < k),$$
  have that  $$\forall k (g(k) < M(F,g)).$$
 What this actually means is that there is a $K_2$-name $\alpha \in \N^\N$ of $M$ such that whenever $h$ is a $K_2$-name of $F$ satisfying $(\ast)$, then $$((\alpha \bullet h)* g) $$ is defined and independent of the choice of $h$ (among the names of $F$), and $$\forall k(g(k) < ((\alpha \bullet h) * g)).$$
 Now choose $g$ to be the constant zero function and $F$ to be the constant 1 functional. Then $(\ast)$ is satisfied.

 Let $k = M(F,g)$ and let $h$ be defined on the set of finite sequences $\tau$ by
 $$h(\tau) = \left\{ \begin{array}{ccc}0 & {\rm if} & lh(\tau) \leq k+1 \\ 2 & {\rm if} & lh(\tau) > k+1\end{array} \right.$$ where $lh(\tau)$ is the length of the sequence. $h$ is a name for $F$. Then 
$$((\alpha \bullet h)* g) = k+1.$$
 We will only need finite segments $\bar h(a)$ of $h$ and $\bar g(b)$ of $g$ to secure this value. We will obtain a contradiction by constructing a $g_1$ extending $\bar g(b)$ and an $F_1$ with a $K_2$-name $h_1$ extending $\bar h(a)$ such that $(\ast)$ holds for $F_1$ and $g_1$, but where $k$ is not an upper bound for $g_1$.

 Without loss of generality, we may assume that $a = b \geq k+1$.
 Let $g_1(n) = 0$ if $n < a$, while $g_1(n) = a$ if $n \geq a$.
 Let $F_1(f) = 1$ if $\bar f(k+2) < a$, and $F_1(f) = a+1$ otherwise.
 Then $F_1$ has a $K_2$ name $h_1$ defined by
 $$h_1(\tau) = \left \{ \begin{array}{ccccc}0& {\rm if} &lh(\tau) \leq k+1& & \\2 & {\rm if} & lh(\tau) > k+1 & {\rm and}& \bar \tau(k+1) < a\\a+2 & {\rm if} & lh(\tau) > k+1 & {\rm and} & \bar \tau(k+1) \geq a\;.\end{array} \right .  $$
 The pair $(F_1,g_1)$ will satisfy $(\ast)$, and $h_1$ will be an extension of $\bar h(a)$. 
 $M(F_1,g_1) \geq a+1$ since $a+1$ is the least proper upper bound of $g_1$. This contradicts that $M(F_1,g_1) = k$ since $g_1$ extends $\bar g(a)$ and $F_1$ has a name extending $\bar h(a)$.
 The assumption leading to this contradiction was the existence of $M$ with the given properties. Thus there is no such extensional functional $M$.
 This ends our example.
\end{exa}

\subsection{The Riemann Permutation Theorem}
The {Riemann Permutation Theorem} is the following classical result:
\begin{prop}[RPT]
Let $\{x_i\}_{i \in \N}$ be a sequence of reals such that for every permutation $p$ of $\N$ we have that the series
$$\sum_{i = 0}^\infty x_{p(i)}$$ converges.
Then the series $$\sum_{i = 0}^\infty |x_i|$$ converges, i.e. the series converges absolutely.
\end{prop}
We will prove that {\bf RPT} is not realizable in the extensional model of continuous functionals. In this section we will make our claim precise, and then we will prove it in Section \ref{RPT-proof}.

First, we will simplify the statement by restricting it to sequences of rational numbers in $\Q$. We will assume the existence of an underlying enumeration of $\Q$, but in order to save notation, we will treat $\Q$ as a discrete set of the same computational status as $\N$. Arithmetical equalities and inequalities on $\Q$ are of course decidable.

Then, when we discuss {\bf RPT} (and later, partially Cauchy), a \emph{sequence} is  a function $\bar x:\N \rightarrow \Q$.

Classically, a sequence has a limit if and only if it is Cauchy, and a series has a limit if and only if the sequence of partial sums is Cauchy. As is customary in constructive mathematics, we will use this as the \emph{definition} of \emph{ having a limit}.

We will use the notation $\bar x = \{x_i\}_{i \in \N}$ as a convention
without always stating this equality explicitly.

The sequence $\bar x$ is Cauchy if
$$\forall n \exists m \forall i \geq m \forall j \geq m (|x_i - x_j| < 2^{-n}).$$
A realizer for this will be a function $f:\N \rightarrow \N$ such that
$$\forall n \forall i \geq f(n) \forall j \geq f(n)(|x_i - x_j| < 2^{-n})$$
together with the sequence itself.
A more standard terminology, that we will adopt, is that $f$ is a modulus (of Cauchyness) for $\bar x$.

Next we will see what  a realizer for the assumption in {\bf RPT} will be. Ignoring that $\bar x$ is a parameter, a classical realizer will be a function $F$ that maps a permutation $p$ to a modulus for $$\left \{\sum_{i < n}x_{p(i)} \right \}_{n \in \N}.$$
Actually, $F$ should take a pair $(p,g)$ as argument, where $p$ is a permutation and $g$ is a realizer of this fact. However, a realizer for $p$ being a permutation will be the inverse, and since the set of permutations $p$ and the set of pairs $(p,p^{-1})$ are homeomorphic, considering  realizers in this case will only cause extra notational complexity.

Thus given $\bar x$, a realizer for the assumption in {\bf RPT} for $\bar x$ will be a continuous functional mapping a permutation $p$ to a modulus $F(p)$ for the corresponding series, where $F$ is continuous if it has a $K_2$-name $f$ satisfying that $f \bullet p$ is a modulus for $$\left \{\sum_{i < n}x_{p(i)}\right \}_{n \in \N}$$ whenever $p$ is a permutation.

A realizer for {\bf RPT} will then be a continuous functional $M$ that to any sequence $\bar x$ and a realizer $F$ for the assumption of {\bf RPT} on $\bar x$ produces a modulus $M(\bar x , F)$ for the Cauchyness of $$\left \{\sum_{i < n}|x_i|\right \}_{n \in \N}.$$ The functional $M$ is continuous according to our model if there is a function $h$ such that whenever $\bar x$ is a sequence and whenever $f$ is a name for a realizer $F$ for the assumption in {\bf RPT} for $\bar x$, then $(h \bullet \bar x )\bullet f = M(\bar x,F)$, and then we let $h$ be a $K_2$-name for $M$. Here, we of course consider $\bar x$ as a sequence of numbers via the given enumeration of $\Q$.

We will prove  that there is no such continuous realizer for {\bf RPT}.
\subsection{Partially Cauchy implies Cauchy}
Another classical theorem (in the sense of the logic needed to prove it) is that every partially Cauchy sequence is indeed Cauchy. While the Riemann Permutation Theorem is an established theorem in classical analysis, the theorem that all partially Cauchy sequences are Cauchy  must be viewed as a constructed example of a statement with only a slight non-constructive content. The definition of partially Cauchy is due to Fred Richman (unpublished), and is as follows
\begin{defi} Let $\bar x$ be a sequence of rational numbers.
$\bar x$ is \emph{partially Cauchy} if we for every total function $f \geq id$ on $\N$ have that
$$\lim_{n \rinf}diam\{x_n , \ldots , x_{f(n)}\} = 0.$$
\end{defi}

A realizer for $\bar x$ being partially Cauchy will be a continuous functional $F$ with values in $\N^\N$ and  defined on the set of $f \geq id$ such that $F(f)$ is an increasing function for each $f \geq id$, and 
$$\forall f \geq id\; \forall n\; \forall k \geq F(f)(n)(diam\{x_k , \ldots , x_{f(k)}\} < 2^{-n}).$$
We use the $K_2$-naming of the set of realizers in the same way as before.
Then a realizer of {\bf pC} $ \Rightarrow$ {\bf C} will be a continuous functional $M$ that to any rational sequence $\bar x$ and any continuous  realizer $F$ of $\bar x$ being partially Cauchy provides a modulus $M(\bar x , F)$ for $\bar x$ being Cauchy.

We do not give the detailed explanation of what the continuous realizers will be in this case, assuming that the reader can fill in the details when needed.

In an informal note circulated to a few people, a sequence $\bar x$ was erroneously defined  as partially Cauchy when $$diam\{x_n , \ldots , x_{f(n)}\}$$ is Cauchy whenever $f \geq id$. This error was observed by Thomas Streicher.

This is actually equivalent to the sequence being partially Cauchy, and Streicher pointed out in a private communication that this is evenly constructively so. This result has no impact on the rest of this paper, and we leave the proof as an exercise for the interested reader.

In Subsection \ref{RPT-proof} we will prove that our model does not realize that every partially Cauchy sequence is Cauchy. 
\subsection{Anti-Specker spaces}\label{aS} An \emph{anti-Specker space} will be a metric space $X$ satisfying a statement that in a roundabout way expresses that $X$ is sequentially compact. In order to realize properties of anti-Specker spaces in $K_2$ we have to be precise in what an anti-Specker space is, and how to model one  in the sense of $K_2$. The concept of an anti-Specker space is not uniquely defined in the literature. We take our definition from \cite{LD}, restricting it to metric spaces where the distance function is bounded by 1.
\begin{defi}
Let $\langle X , d\rangle$ be a metric space, $x \in X$ and $\{x_n\}_{n \in \N}$ a sequence from $X$.
\newline
We say that $\{x_n\}_{n \in \N}$ \emph{avoids} $x$ if $$\exists \epsilon > 0 \exists n \in \N\forall m \geq n(d(x,x_m) > \epsilon).$$
\end{defi}

\begin{defi}
Let $\langle X,d\rangle$ be a metric space where $d$ takes values in $[0,1]$.
Extend $X$ to $X^\ast = X \cup\{\ast\}$  with one extra point $\ast$, and the metric $d$ to $d^\ast$ by letting $d^\ast(x,\ast) = 1$ for all $x \in X$.
We say that $X$ is an \emph{anti-Specker space} if whenever $\{x_n\}_{n \in \N}$ is a sequence from $X^\ast$ avoiding all elements $x \in X$ there is a number $n$ such that $x_m = \ast$ for all $m \geq n$.
\end{defi}

We will use \emph{namings} to model metric spaces in general, and anti-Specker spaces in particular, in $K_2$. 

First of all, we need a decent naming of the reals. Ignoring the need of trivial coding, we will use the \emph{ signed digit representation}:
\begin{defi} Let $\hat \R = \Z \times \{-1,0,1\}^\N$ considered as functions defined on $\N$.
For $f \in\hat  \R$, let $$\rho(f) = f(0) + \sum_{n = 1}^\infty f(n)\cdot2^{-n}$$
\end{defi}

The point is that any other continuous naming of $\R$ can be factorized through $\hat \R$ via $\rho$.
\begin{defi} A \emph{metric naming} will be an ordered tuple $\langle A,\nu,X,d,\hat d\rangle$ where
\begin{enumerate}
\item $\langle X,d\rangle$ is a metric space in the ordinary sense.
\item $A \subseteq \N^\N$ and $\nu:A \rightarrow X$ is a naming.
\item $\hat d:A^2 \rightarrow \hat \R$ is continuous and $\rho (\hat d(f,g)) = d(\nu(f),\nu(g))$ for all $f,g \in A$.
\end{enumerate}
\end{defi}

We do not have an exact reference for the following two observations, but there is nothing original in the arguments, see e.g. \cite{Schroder} for a more systematic treatment:
\begin{prop}\label{proposition1}Let $\langle A , \nu , X , d , \hat d\rangle$ be a metric naming. The quotient topology on  $X$ induced by $\nu$ is finer than the topology induced by the metric $d$.\end{prop}
\begin{proof}
Let $x,y \in X$ with $d(x,y) = a$. Let $f,g \in A$ with $\nu(x) = f$ and $\nu(g) = y$. Then $$\rho(\hat d(f,g)) = a.$$
Let $\epsilon > 0$. Then $U = \{g \in \hat R \mid |\rho(g) - a| < \epsilon\}$ is open in $\hat R$, and is a union of equivalence classes with respect to the equality relation of $\rho$.
Then $\{(f_1,f_2) \in A^2 \mid \hat d(f_1,f_2) \in U\}$ is open in $A^2$ and is the union of equivalence classes with respect to the equality relation of $\nu^2$.

The $\nu^2$-range of this set will then be open in the quotient topology, and coincides with the $d$-inverse of $\langle  a - \epsilon, a + \epsilon\rangle$. This shows that $d$ is continuous with respect to the quotient topology, and the proposition is established.

\end{proof}
\begin{prop}\label{proposition2} Let $\langle X,d\rangle$ be a metric space. If $\langle X,d\rangle$ allows a metric naming, then there is one metric naming $\langle B,\eta,X,d,\bar d\rangle$ such that every other metric naming of $\langle X,d\rangle$ can be reduced to it.
Moreover, the quotient topology induced by $\eta$ will coincide with the metric topology on $X$.
\end{prop}
\begin{proof}If a topological space $X$ has a naming, it will be \emph{herditarily Lindel\"of}, meaning that every open covering of a subset of $X$ has a countable subcovering. This is a consequence of the fact that the domain of the naming has a countable base. As a consequence of Proposition \ref{proposition1} and the assumption we then see that if $\langle X,d\rangle$ allows a metric naming, there will be a map $s \mapsto O_s$ from the set of finite sequences from $\N$ to the set of open subsets of $X$ such that
\begin{itemize}
\item[-]$O_\varepsilon = X$, where $\varepsilon$ is the empty sequence.
\item[-] If the sequence $s$ has length $n>0$, then the diameter of $O_s$ is $\leq 2^{(1-n)}$

\item[-] For each sequence $s$,
$$\{O_{sn} \mid n \in \N\}$$ is an open covering of $O_s$.
\end{itemize}
Let $$B = \{f \in \N \rightarrow \N \mid \bigcap_{n \in \N}O_{\bar f(n)} \neq \emptyset\}$$ and let $\eta(f)$ be the unique element in this intersection when $f \in A$.
By construction, the map $(f,g) \mapsto d(\eta(f),\eta(g))$, mapping $B^2 \rightarrow \R$, is continuous, and then there will be a continuous $\bar d:B^2 \rightarrow \hat R$ such that
$$\rho(\bar d(f,g)) = d(\eta(f),\eta(g))$$ for all $(f,g) \in B^2$. (Every continuous function from a zero-dimensional space to $\R$ can be factorized through $\hat R$ via $\rho$.)
Let us first show that if $U \subseteq X$ is open in the quotient topology, then $U$ is open in the metric topology. So, let $U$ be open, let $x \in U$ and let $f \in B$ be such that $x = \eta(f)$.
Then there is a number $n$ such that 
$$\forall g \in B(\forall i < n(g(i) = f(i)) \Rightarrow \eta(g) \in U).$$
Let $s = \bar f(n)$. By the construction of $B$ and $\eta$ we see from the above that $O_s \subseteq U$. It follows that $U$ is open in the metric topology.

Now, let $A$ with $\nu$ and $\hat d$ be any other metric naming. We will construct a continuous function $H:A \rightarrow B$ whose name will be a reduction of $\langle A , \nu\rangle$ to $\langle B,\eta\rangle$. Actually, we will let $f \in A$ and we will construct $g = H(f)$ by recursion, where we use the word \emph{construct} in a rather liberal way.

For $m \in \N$, let $A_{\bar f(m)} = \{g \in A \mid \bar f(m) = \bar g(m)\}$. We will find an increasing sequence $m_0,m_1, \ldots , m_k,\ldots$ and decide the value of $g(k)$ on the basis of $\bar f_{m_k}$.
We let $m_0 = 0$.

Assume that $m_k$ and $\bar g(k)$ is determined such that $\nu(f) \in O_{\bar g(k)}$.
Then, for some $m_{k+1}$ and $n$ we have that $$A_{\bar f_{m_{k+1}}} \subseteq \nu^{-1}(O_{\bar g(k)n}).$$
Select one such pair, and let $g(k) = n$.

If we at each stage choose the least possible $m_{k+1}$ and then the least possible $n$, we may view the construction of $H$ as a map from finite sequences to finite sequences, so it has a name $h$ in $K_2$. That $h$ is a reduction of $\langle A,\nu\rangle$ to $\langle B,\eta\rangle$ is then trivially verified.
\end{proof}

\begin{defi} A naming satisfying the conclusions of Proposition \ref{proposition2} will be called a \emph{universal metric naming of $\langle X,d\rangle$}.
\end{defi}
From now on we will restrict ourselves to the situation where the metric $d$ takes values in $[0,1]$. We then let $\langle X^\ast, d^\ast \rangle$ be as above. Without loss of generality we will assume that we have chosen a fixed naming $\langle A,\nu,\hat d\rangle$   in such a way that we can extend it to a set $A^\ast = A \cup\{f_\ast\}$ and $\nu^\ast:A \cup\{f_\ast\} \rightarrow X \cup \{\ast\}$ such that we can continuously and uniformly in the choice of naming  decide, for $f \in A^\ast$, if $f \in A$ or $f = f_\ast$. We may, for instance, let $A \subset (\N \rightarrow \N_{> 0})$ and $f_\ast$ be the constant zero function.

In order to simplify the notation, we will use $\nu$ for both namings, and we will not distinguish, in notation, between the metrics on $X$ and on $X \cup \{\ast\}$.
We will not assume that the naming is universal in general.
\begin{defi}
Using the notation from above we define:
Let $f \in A$ and let $\{f_i\}_{i \in \N}$ be a sequence from $A \cup \{f_\ast\}$.
Then
$\{f_i\}_{i \in \N}$ \emph{avoids} $f$,  if $\{\nu(f_i)\}_{i \in \N}$ avoids $\nu(f)$.\end{defi}

This actually means that there are numbers $n$ and $m$ such that
$d(\nu(f),\nu(f_i)) \geq 2^{-n}$ for all $i \geq m$. 
We then have, independent of the choice of naming:
\begin{itemize} 
\item[-]$X$ is an anti-Specker space if for all sequences $\{f_i\}_{i
  \in \N}$ from $A \cup\{f_\ast\}$ that avoids all $f \in A$ we have
  that $f_i = f_\ast$ for all but finitely many $i$.
\end{itemize}
Using the standard definitions of realizability, we see:

\begin{obs}
Let $X$ and $X^\ast$ be as above, $\nu$ a metric naming of $X$.
\begin{enumerate}[label=(\alph*]
\item[a)] Let $\{f_i\}_{i \in \N}$ be a sequence from $A \cup
  \{\ast\}$. $K_2$ realizes that $\{f_i\}_{i \in \N}$ avoids all $f
  \in A$ if there is a continuous functional $H:A \rightarrow \N$ such
  that if $f \in A$ and $H(f) = \langle n,m \rangle$ then
  $d(\nu(f),\nu(f_i)) \geq 2^{-n}$ for all $i \geq m$.\medskip
 
  \noindent\emph{$H$ will then be a realizer of this fact, and we use the
  $K_2$-naming of partial continuous functionals to define the
  topology on the set of realizers.}\medskip

\item[b)] $K_2$ realizes, with respect to the given naming, that $X$ is an anti-Specker space if there is a continuous functional $M$ defined on the set of pairs $(\{f_i\}_{i \in \N},H)$  of sequences $\{f_i\}_{i \in \N}$ from $A \cup \{f_\ast\}$ avoiding all $f \in A$ and the realizers $H$ of this fact  such that $$\forall j \geq M(\{f_i\}_{i \in \N},H)(f_j = f_\ast)$$
 
  \noindent\emph{$M$ will be the realizer, and the topology on the set
    of realizers is defined from the $K_2$-naming.}
\end{enumerate}
\end{obs}
\begin{rem} Notice that we do not insist that the realizer $H$ in a) respects the equivalence induced by $\nu$. One good reason is that in the case when $\langle X,d\rangle$ is a connected metric space, only the constant functions into $\N$ would then be possible.
\end{rem}

\begin{rem} We have described what it means for $K_2$ to realize that a given naming names an anti-Specker space. In the sequel we will prove that this is independent of the choice of the naming. We will then use this as our definition of $K_2$ realizing that $X$ is an anti-Specker space. 

Strictly spoken, the theory of metric spaces is higher order, but
since we will avoid a general discussion of what it means for $K_2$ to
realize arithmetical statements of order beyond 2, we stop our
analysis of how $K_2$ relates to anti-Speckerhood here.
\end{rem}
The problem under discussion is whether the property that the product of two anti-Specker spaces is anti-Specker is realized by $K_2$. It is easy to define a naming of the product $X \times Y$ of two named metric spaces $X$ and $Y$,  using the metric on the product space where the distance between two pairs is the maximum of the distances in each coordinate. The precise problem that we tacle will be:
\begin{prob}\label{problem1}  Assume that $M_X$ and $M_Y$ are realizers of $X$ and $Y$ being anti-Specker spaces with respect to some given namings.
Will we then have a realizer for $X \times Y$ being an anti-Specker
space with respect to our chosen naming of the product, and if this is
the case, can we find one such realizer continuously from $M_X$ and
$M_Y$?
\end{prob}
We will give a positive answer to this problem.
In Section \ref{Sec3} we will show that we can choose the naming of $X \times Y$ in such a way that this problem has a positive solution.

A realizer for the statement that the product of two anti-Specker spaces is an anti-Specker space is at type level 4, so a bonus will be that we will have constructed a new example of a functional of type 4 of some mathematical interest.
\section{Anti-Specker spaces and compactness}\label{Sec3}
In this section we will see that $K_2$ will realize that $X$ is an anti-Specker space if and only if $X$ is compact, and we will use an elaboration of this to prove that our model realizes that the product of two anti-Specker spaces is anti-Specker. We will stick to the notation explained in Section \ref{Sec3}.
\begin{lemma}\label{lemma3.1} Let $\{x_i\}_{i \in \N}$ be a sequence from $X \cup \{\ast\}$ that avoids all points in $X$. Let $\{f_i\}_{i \in \N}$ be a sequence where $f_i$ is a name for $x_i$ for each $i \in \N$.
Then $K_2$ will realize that $\{f_i\}_{i \in \N}$ avoids all $f \in A$.
\end{lemma}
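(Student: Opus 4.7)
My plan is to build a continuous functional $H : A \to \N$ such that whenever $H(f) = \langle n,m\rangle$ one has $d(\nu(f),\nu(f_i)) \geq 2^{-n}$ for all $i \geq m$; by observation (a), producing such an $H$ is precisely what it means for $K_2$ to realize the avoidance. I will first work pointwise at each $f$ and then glue the local data together using the Baire-space structure of $A$.

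The first step is to show uniform avoidance on cylinders. Fix $f \in A$. Since the sequence $\{x_i\}$ classically avoids $\nu(f)$, there exist $n',m'$ with $d(\nu(f),\nu(f_i)) > 2^{-n'+1}$ for all $i \geq m'$ (for $i$ with $f_i = f_\ast$ the inequality holds trivially). Because $\hat d : A^2 \to \hat\R$ and $\rho : \hat\R \to \R$ are continuous, the map $f' \mapsto d(\nu(f'),\nu(f))$ is continuous on $A$ with value $0$ at $f$, so I can choose $k$ with $d(\nu(f'),\nu(f)) < 2^{-n'}$ whenever $\bar f'(k) = \bar f(k)$. The triangle inequality then gives, for every such $f'$ and every $i \geq m'$,
$$d(\nu(f'),\nu(f_i)) \geq d(\nu(f),\nu(f_i)) - d(\nu(f'),\nu(f)) > 2^{-n'+1} - 2^{-n'} = 2^{-n'}.$$
Thus every $f \in A$ admits an initial segment $\bar f(k)$ on whose cylinder in $A$ a single pair $\langle n',m'\rangle$ is a uniform witness.

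Next I would package this into a tree argument. Let $T \subseteq \N^{<\N}$ consist of those finite sequences $\sigma$ for which some pair $\langle n,m\rangle$ satisfies $d(\nu(f'),\nu(f_i)) \geq 2^{-n}$ for every $f' \in A$ extending $\sigma$ and every $i \geq m$. Step 1 shows that $T$ meets every branch of $A$. Using classical choice, attach to each $\sigma \in T$ a witnessing pair $\langle n_\sigma,m_\sigma\rangle$, and set $H(f) = \langle n_{\bar f(k(f))}, m_{\bar f(k(f))}\rangle$ where $k(f) = \min\{k : \bar f(k) \in T\}$. To verify continuity, suppose $f' \in A$ agrees with $f$ on the first $k(f)$ coordinates. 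Then $\bar f'(k(f)) = \bar f(k(f)) \in T$, forcing $k(f') \leq k(f)$; a strictly smaller $k(f')$ would make $\bar f(k(f')) = \bar f'(k(f')) \in T$, contradicting the minimality of $k(f)$. Hence $k(f') = k(f)$ and $H(f') = H(f)$, so $H$ is locally constant, hence continuous into discrete $\N$, and therefore has a $K_2$-name.

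I expect the main technical obstacle to be the uniformization in Step 1: transferring the pointwise classical avoidance to a witness that works uniformly on an entire cylinder around $f$. This is exactly where the signed-digit continuity of $\hat d$ does the crucial work, allowing me to squeeze $\nu(f')$ into a ball of radius $2^{-n'}$ around $\nu(f)$ by committing only to a finite prefix of $f$. Once this uniformization is in hand, the tree-pruning in Step 2 is routine and uses nothing beyond the minimality of $k(f)$.
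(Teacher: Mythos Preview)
Your proof is correct and follows essentially the same route as the paper: obtain, for each $f\in A$, a basic neighbourhood on which a single pair $\langle n,m\rangle$ witnesses avoidance (via the triangle inequality and continuity of the metric data), and then glue these local witnesses into a continuous $H$ using the zero-dimensionality of $A$. The only cosmetic difference is that the paper first produces an open covering of $X$ by metric balls and pulls it back to $A$ before refining to a disjoint clopen covering, whereas you work directly with cylinders in $A$ and a ``first hit in $T$'' selector; these are two phrasings of the same argument.
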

\begin{proof}
Let $x \in X$, $f \in A$ and assume that $x = \nu(f)$. Then there are numbers $n$ and $m$ such that $$\forall i \geq m (d(x,x_i) \geq 2^{-n})$$
Then there is an open neighborhood $B$ of $x$ with radius $2^{-(n+1)}$ such that for all $y \in B$ and all $i \geq m$ we have that $d(y,x_i) \geq 2^{-(n+1)}$.
As a consequence we see that there is an open covering $\{O_{m,n}\}_{m,n \in \N}$ of $X$ such that for each $x \in O_{m,n}$ and $i \geq m$ we have that $d(x,x_i) \geq 2^{-n}$.

Let $U_{n,m}$ be the $\nu$-inverse of $O_{n,m}$. This family will be an open covering of $A$.  $A$ has a basis $\{C_i\}_{i \in \N}$ of sets that are both closed and open (clopen), and without loss of generality we may assume that each $C_i$ is a subset of some designated $U_{n_i,m_i}$. Let $f \in A$. We let $f \in V_{n_i,m_i}$ if $i$ is minimal such that $f \in C_i$. Then $V_{n,m} \subseteq U_{n,m}$, $V_{n,m}$ is clopen, the $V_{n,m}$'s are pairwise disjoint and $\{V_{n,m} \mid n,m \in \N\}$ covers $A$.

We let $H(f) = \langle m,n\rangle$ on $V_{m,n}$. $H$ will  be continuous, and a $K_2$-realizer for the fact that $\{f_i\}_{i \in \N}$ avoids all $f \in A$. \end{proof}
%
One consequence of Lemma \ref{lemma3.1} is that if $X$ has a naming for which $K_2$ realizes that $X$ is anti-Specker, then $X$ is sequentially compact, and thus compact.
We will  prove the strong version of the converse, but need some notation first.

For the rest of this chapter, $\sigma$, $\tau$ etc. will denote finite partial functions from $\N$ to $\N$ (and not just finite sequences). Moreover, $n$, $m$, $i$, $j$, $k$ etc. will denote elements of $\N$. $f$, $g$ and $h$ will denote total functions. We use $\sqsubseteq$ as the subfunction-ordering.

When $\sigma$ is a finite sequence, i.e. defined on an initial segment of $\N$, we identify $\sigma$ with its sequence number in order to obtain notational simplicity.
\begin{defi} Given the metric naming $\langle A,\nu, X, d, d^\ast\rangle$, $\sigma$ and $n$, we let
$$O^X_{\sigma,n} = \{x \in X \mid \exists  f \in A ( \sigma \sqsubseteq f \wedge d(x,\nu(f)) < 2^{-n})\}$$
We will omit the superscript $X$ when there can be no confusion.
\end{defi}
Notice that $O_{\sigma,n}$ will be an open set. We cannot normally tell, from the available data, if $O_{\sigma,n}$ is empty or not.
\begin{lemma} \label{lemma3.2}
Let $X$ be compact. Then $K_2$ realizes that $X$ is anti-Specker (with respect to any naming).
\end{lemma}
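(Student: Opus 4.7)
The plan is to use compactness of $X$ together with continuity of $H$ to reduce any instance to finite data. Fix a $K_2$-name $h$ for the realizer $H$. For each $f\in A$, $h$ commits at some initial segment $\sigma_f=\bar f(k_f)$, giving $H$ the constant value $\langle n_f,m_f\rangle$ on $\{g\in A : \sigma_f\sqsubseteq g\}$. Since $\nu(f)\in O^X_{\sigma_f,n_f+1}$ by construction, the family $\{O^X_{\sigma_f,n_f+1}:f\in A\}$ is an open cover of $X$; compactness extracts a finite subcover $\{O^X_{\sigma_k,n_k+1}\}_{k\leq K}$, and I would set $M_0=\max_{k\leq K}m_k$.

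To verify $M_0$ works, take $i\geq M_0$ and any $x\in X$, choose $k$ with $x\in O^X_{\sigma_k,n_k+1}$, and pick $f\in A$ with $\sigma_k\sqsubseteq f$ and $d(x,\nu(f))<2^{-(n_k+1)}$. The realizer property of $H$ gives $d(\nu(f),\nu(f_i))\geq 2^{-n_k}$ (valid also when $f_i=f_\ast$, since $d^\ast(\nu(f),\ast)=1$), and the triangle inequality forces $d(\nu(f_i),x)\geq 2^{-(n_k+1)}>0$. Were $f_i$ in $A$, taking $x=\nu(f_i)$ would contradict this, hence $f_i=f_\ast$. To obtain a realizer $M$ that is $K_2$-continuous and extensional in $H$ (that is, independent of the particular name $h$), I plan to output not $M_0$ itself but
$$M^{*}(\{f_i\})=\min\{j\leq M_0 : \forall i\in[j,M_0]\,(f_i=f_\ast)\}.$$
A short check shows $M^{*}$ is independent of which admissible upper bound $M_0$ is used (any such bound yields the same minimum), so different $K_2$-names for the same $H$ produce the same output; and only finitely much of $h$ (enough to determine one $M_0$) and of $\{f_i\}$ (enough to distinguish $f_\ast$ from $A$-elements below $M_0$) is read.

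The principal obstacle I anticipate is organising the extraction of $M_0$ from $h$ as a $K_2$-continuous process, which amounts to algorithmically verifying that a candidate finite family $\{O^X_{\sigma,n_\sigma+1}\}_{\sigma\in F}$ already covers $X$. My plan is to exploit total boundedness of compact $X$: fix externally a sequence of finite $2^{-J}$-nets of $X$ with chosen $A$-names $g_{J,1},\ldots,g_{J,L_J}$, and test, for each candidate $F$ and each $J>\max_{\sigma\in F}(n_\sigma+2)$, whether every net point satisfies $\nu(g_{J,l})\in O^X_{\sigma,n_\sigma+2}$ for some $\sigma\in F$; a triangle-inequality margin then ensures $X\subseteq\bigcup_{\sigma\in F}O^X_{\sigma,n_\sigma+1}$. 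Membership $\nu(g)\in O^X_{\sigma,n_\sigma+2}$ is a $\Sigma^0_1$ condition searchable via $\hat d$ (by witnessing some $f\in A$ with $\sigma\sqsubseteq f$ and $\hat d(g,f)$ sufficiently small), and compactness of $X$ guarantees that this entire search terminates on every valid input, yielding the desired $K_2$-continuous realizer.
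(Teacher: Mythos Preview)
Your first two paragraphs are essentially the paper's argument: from a $K_2$-name $h$ of $H$ extract, for each $f\in A$, a finite commitment $\sigma$ with $h(\sigma)=\langle n,m\rangle+1$, observe that the resulting sets $O_{\sigma,n}$ cover $X$, apply compactness to pass to a finite subcover, take the maximum of the associated $m$'s as an upper bound, and then output the true least $m$. Your $M^{*}$ coincides with the paper's $M(\{f_i\},H)=\mu m\,\forall i\geq m(f_i=f_\ast)$, which the paper likewise notes is independent of $H$ and computable from $\{f_i\}$ together with any upper bound.

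Your third paragraph, however, is unnecessary and rests on a conflation of $K_2$-continuity with effectivity. A $K_2$-name for $M$ is merely an element of $\N^\N$; it need not be computable. The predicate ``does the finite family $\{O_{\sigma,n_\sigma}\}_{\sigma\in F}$ cover $X$?'' has a definite classical truth value and may simply be hard-coded into the name. This is exactly what the paper does, and the Remark immediately following the proof says so explicitly: the construction is ``nonuniform and noneffective in the sense that we actually have to know when a finite set of open sets of the form $O_{\sigma,n}$ is an open covering of $X$ or not.'' The later lemmas (introducing a \emph{compactness base}) are where the paper addresses uniformity, but that is not needed for the present lemma. Your proposed net-based verification, besides being superfluous for the statement at hand, also has a gap as written: witnessing $\nu(g)\in O_{\sigma,n}$ requires exhibiting some $f\in A$ with $\sigma\sqsubseteq f$ and $d(\nu(g),\nu(f))$ small, but $A$ is an arbitrary subset of $\N^\N$ with no enumerability assumed, so this search cannot in general be carried out effectively from the given data.
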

\begin{proof}
Let $\{f_i\}_{i \in \N}$ be a sequence from $A \cup \{f_\ast\}$ that avoids all $f \in A$ and let $H$ be a realizer of this fact. Let
$$M(\{f_i\}_{i \in \N},H) = \mu m \forall i \geq m (f_i = f_\ast)$$
The value of $M$ is actually independent of $H$ and we can compute $M(\{f_i\}_{i \in \N},H)$ from $\{f_i\}_{i \in \N}$ and any upper bound of $M(\{f_i\}_{i \in \N},H)$.
Thus it suffices to show that for any name $h$ for a realizer $H$ of the assumption that  the sequence avoids all $f \in A$, we can determine an upper bound for the value of $M$ from finite information from $h$. So, let $h$ be given.

For each $f \in A$ there is a finite sequence $\sigma \sqsubseteq f$ such that $h(\sigma) > 1$. If $h(\sigma) = \langle n,m \rangle + 1$ we of course have that $\nu(f) \in O_{\sigma,n}$, but we will also have that for $i \geq m$, $\nu(f_i) \not \in O_{\sigma,n}$ (due to the choice of $H$ and $h$).

We get an open covering of $X$ this way, and since $X$ is compact, there will be a finite subcovering
$$O_{\sigma_1,n_1}, \ldots , O_{\sigma_k,n_k}$$ where $h(\sigma_j) = \langle n_j,m_j\rangle + 1$ for $j = 1 , \ldots , k$.

If we now consider the least initial segment of $h$ for which we can find an open finite covering as above, we see that $$\max\{m_1 , \ldots , m_k\}$$ will be an upper bound for $M$.  From this we can compute the value of $M$ itself.
\end{proof}
\begin{rem}
This proof is nonuniform and noneffective in the sense that we
actually have to know when a finite set of open sets of the form
$O_{\sigma,n}$ is an open covering of $X$ or not. Thus we have not yet
proved that the following corollary can be realized in $K_2$:
\end{rem}
\begin{cor}
If $K_2$ realizes that $X$ and $Y$ are anti-Specker, then $K_2$ realizes that $X \times Y$ is anti-Specker.
\end{cor}
We will now set forth to improve this corollary to a positive solution of Problem \ref{problem1}.
\begin{defi} A \emph{ base covering} of $X$ will be an enumerated sequence $\{(\sigma_k,n_k)\}_{k \in \N}$ such that
for each $f \in A$   there is a $k \in \N$ with $\sigma_k \sqsubseteq f$.
\end{defi}
The base covering will represent the covering $$\{O_{\sigma_k,n_k} \mid k \in \N\},$$ and for $X$ to be compact, it suffices that each such covering has a finite subcovering.
\begin{defi} Let $X$ be a named metric space as above.
A \emph{compactness base} for $X$ is a set $\mathcal B$ of finite sets $\Theta = \{(\sigma_1,n_1), \ldots , (\sigma_k,n_k)\}$ such that 
\begin{itemize}
\item for each $\Theta \in {\mathcal B}$ the set
$$\hat \Theta = \{O_{\sigma_1,n_1} , \ldots , O_{\sigma_k,n_k}\}$$
is a covering of $X$ 
\item  whenever  $\{(\tau_{i},m_{i})\}_{i \in \N}$
 is a  base covering of $X$, then there is a $$\Theta = \{(\sigma_1,n_1) , \ldots , (\sigma_k,n_k)\} \in {\mathcal B}$$ such that for each $j \leq k$ there is an $i \in  \N$ with $\tau_{i} \sqsubseteq \sigma_j$ and $m_{i} \leq n_j$.
\end{itemize}
We say that $\Theta$ \emph{subcovers} $\{(\tau_i,n_i)\}_{i \in \N}$ when $\Theta$ is as above.
\end{defi}
\begin{rem} When $\Theta$ subcovers $\{(\tau_i,n_i)\}_{i \in \N}$, it
  is in essence a witness to the fact that $\{O_{\tau_i,n_i}\}_{i \in
    \N}$ has a finite subcovering.
\end{rem}

\begin{lemma}\label{lemma3.7}
We can, continuously in an enumeration of a compactness base ${\mathcal B}$ for $X$ construct a name $\alpha$ for the object realizing that $X$ is anti-Specker.
\end{lemma}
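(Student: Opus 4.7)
The plan is to lift the argument of Lemma~\ref{lemma3.2} from a single existence claim about a finite subcovering to a uniform procedure that extracts such a subcovering from the compactness base $\mathcal{B}$, using an enumeration $e$ of $\mathcal{B}$. Recall that the realizer we want to name sends a pair $(\{f_i\}_{i\in\N},H)$, where $H$ realizes that $\{f_i\}_{i\in\N}$ avoids every $f\in A$, to the least $M$ with $f_i=f_\ast$ for all $i\geq M$. Given a $K_2$-name $h$ of $H$, I will first produce an upper bound $M_0$ for $M$, and then compute $M$ itself from $\{f_i\}$ and $M_0$, using the uniformly decidable predicate $f_i=f_\ast$ provided by the naming.

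Concretely, $\alpha$ carries out the following interleaved search. From $h$, enumerate all triples $(\sigma,n,m)$ for which $h(\sigma)=\langle n,m\rangle+1$; the pairs $(\sigma,n)$ then form an enumerated base covering of $X$, since totality of $H$ on $A$ forces, for each $f\in A$, some initial segment $\bar f(k)$ to satisfy $h(\bar f(k))>0$, and $\nu(f)\in O_{\bar f(k),n}$ holds trivially for the corresponding $n$. In parallel, enumerate $\mathcal{B}$ via $e$. At each stage $s$, check every $\Theta=\{(\rho_1,r_1),\ldots,(\rho_\ell,r_\ell)\}$ enumerated so far against the triples enumerated so far, asking whether for every $j\leq \ell$ there is some triple $(\sigma_{i_j},n_{i_j},m_{i_j})$ with $\sigma_{i_j}\sqsubseteq\rho_j$ and $n_{i_j}\leq r_j$. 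This is a finitely decidable test. By the defining property of a compactness base, some $\Theta$ must eventually pass; for the first one found, set $M_0=\max_j m_{i_j}$, and then compute $M$ by scanning $f_0,\ldots,f_{M_0-1}$ for the largest index at which $f_i\neq f_\ast$.

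Correctness of the bound repeats the argument of Lemma~\ref{lemma3.2}: if $i\geq M_0$ and $f_i\neq f_\ast$, then $\nu(f_i)\in O_{\rho_j,r_j}$ for some $j$, witnessed by $g\in A$ with $\rho_j\sqsubseteq g$ and $d(\nu(f_i),\nu(g))<2^{-r_j}$; since $\sigma_{i_j}\sqsubseteq g$ as well, $h$ outputs $\langle n_{i_j},m_{i_j}\rangle$ on $g$, and the avoidance property of $H$ at $g$ applied to $i\geq m_{i_j}$ forces $d(\nu(g),\nu(f_i))\geq 2^{-n_{i_j}}\geq 2^{-r_j}$, contradicting the previous inequality. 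Continuity of both the map $e\mapsto\alpha$ and of $\alpha$ on its inputs is automatic, since every subroutine's output depends only on finite prefixes of $e$, $h$, and $\{f_i\}$. The main obstacle I anticipate is not conceptual but bookkeeping: packaging the triple enumeration from $h$, the enumeration from $e$, and the subcover test together as a single $K_2$-name, and verifying that termination on every legitimate input follows precisely from the second clause in the definition of a compactness base.
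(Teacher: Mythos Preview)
Your proposal is correct and is essentially the paper's own argument: extract from the name $h$ the base covering $\{(\sigma,n):\exists m\ h(\sigma)=\langle n,m\rangle+1\}$, search the enumeration of $\mathcal{B}$ for a $\Theta$ subcovering it, take $M_0=\max_j m_{i_j}$ as an upper bound, and then compute the exact $M$ from $\{f_i\}_{i<M_0}$. One detail worth tightening (glossed over in the paper as well) is that the triples you enumerate from $h$ should be restricted to $\sqsubseteq$-minimal $\sigma$, so that $h(\sigma_{i_j})$ genuinely determines $H(g)$ for every $g\sqsupseteq\sigma_{i_j}$; without this, the step ``$h$ outputs $\langle n_{i_j},m_{i_j}\rangle$ on $g$'' can fail, since a shorter prefix of $g$ may carry an unrelated value of $h$.
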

\begin{proof}
We use the proof of Lemma \ref{lemma3.2}, but with some care. Let $h$ be a name for a realizer $H$ for the assumption that $\{\nu(f_i)\}_{i \in \N}$ avoids all elements in $X$. Then
$$\{(\sigma,n) \mid \exists m (h(\sigma) = \langle n,m \rangle + 1)\}$$ is a base covering.
We search $\mathcal B$ for a $$\Theta = \{(\sigma_1,n_1) , \ldots , (\sigma_k,n_k)\}$$ such that for each $j = 1 , \ldots , k$ there is a \emph{sequence} $\tau_j \sqsubseteq \sigma_j$ and numbers $n'_j \leq n_j$ and $m_j$ such that $$h(\tau_j) = \langle n'_j,m_j\rangle + 1.$$
Then $$M(\{f_i\}_{i \in \N},H) \leq \max\{m_j \mid 1 \leq j \leq k\}.$$
There is an  $l$ such that all information from $h$ used in this search is in $\bar h(l)$. We let $$\alpha(\bar h(l)) = M(\{f_i\}_{i \in \N},H) + 1$$ when $l$ is sufficiently large, $\alpha(\bar h(l)) = 0$ otherwise.

For any $h$ and $l$ we can decide if $\bar h(l)$ approximates a name for a realizer $H'$ of some sequence $\{f'_i\}_{i \in \N}$ avoiding all $f \in A$ well enough to determine the value of $M(\{f'_i\}_{i \in \N},H')$ in the way we have described, so our $\alpha$ will be a total function. 
\end{proof}
\begin{lemma} \label{lemma3.8}
Let $X$ be compact and let $\alpha$ be a name for the functional $M$ realizing that $X$ is anti-Specker.
Then, continuously in $\alpha$, we can construct an enumeration of a compactness base for $X$.
\end{lemma}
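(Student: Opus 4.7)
The plan is to reverse the construction of Lemma \ref{lemma3.7}: given $\alpha$ I will read off a compactness base by recording the finite fragments of input data on which $\alpha$ halts on the trivial constant-$f_\ast$ sequence. Fix a name $\phi_\ast$ of $\{f_\ast\}_{i \in \N}$ and put $\gamma = \alpha \bullet \phi_\ast$. Since any $H : A \to \N$ realizes that the constant $f_\ast$-sequence avoids every element of $A$ (all distances being $1$) and since the correct $M$-output is $0$, $\gamma$ must name the constant-$0$ functional on realizers, so $\gamma * h = 0$ for every total $h$. For each finite $\rho = (\rho_0, \ldots, \rho_{L-1})$ with $\gamma(\rho) = 1$ I will enumerate into $\mathcal{B}$ the set
\[
\Theta_\rho = \{(\tau_i, n_i) : i < L,\ \rho_i = \langle n_i, m_i\rangle + 1 \text{ for some } m_i\},
\]
where $\tau_i$ is the finite sequence with sequence number $i$. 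This enumeration is manifestly continuous in $\alpha$.

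The subcovering clause mirrors Lemma \ref{lemma3.7} in reverse. Given a base covering $\{(\tau_i, m_i)\}_{i \in \N}$ of $X$, I will define $h$ by $h(\sigma) = \langle m_i, 0\rangle + 1$ for the least $i$ with $\tau_i \sqsubseteq \sigma$ (and $0$ otherwise); this $h$ is total, so there is a minimal $L$ with $\gamma(\bar h(L)) = 1$, and every pair in the extracted $\Theta_{\bar h(L)}$ is by construction of the form $(\sigma, m_i)$ with $\tau_i \sqsubseteq \sigma$, so it subcovers $\{(\tau_i, m_i)\}$.

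The main obstacle will be the covering clause: that $\hat\Theta_\rho$ covers $X$ for every enumerated $\Theta_\rho$. I plan a contradiction argument. Suppose some $x \in X$ is not covered and let $f \in A$ name $x$; no $\tau$ appearing in $\Theta_\rho$ can be a prefix of $f$, since otherwise $d(x, \nu(f)) = 0$ would place $x$ in $O_{\tau, n}$. I will then fabricate a valid input $(\phi', h')$ to $M$ extending the initial segments of $(\phi_\ast, \rho)$ that $\alpha$ actually queries, but whose correct $M$-value is strictly positive. Concretely, choose $N$ large enough that no position involving the $(N+1)$-st sequence entry is queried by $\alpha$ in $\phi_\ast$, let $\{f_i'\}$ have $f_{N+1}' = f$ and $f_i' = f_\ast$ otherwise, and let $\phi'$ name this sequence while agreeing with $\phi_\ast$ on the queried positions. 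Define $h'$ to match $\rho$ on positions $< L$ and, on positions beyond $L$, to stamp every finite sequence of some large uniform length $L'$ with the value $\langle 0, N+2\rangle + 1$. The crucial bookkeeping is that the associated $H'$ is a genuine realizer of avoidance for $\{f_i'\}$: for $f'' \in A$ whose first positive prefix lies in $\rho$ via some $(\tau, n) \in \Theta_\rho$, the only non-trivial avoidance demand at the new position $N+1$ reads $d(\nu(f''), \nu(f)) \geq 2^{-n}$, which is precisely what the failure of $O_{\tau, n}$ to cover $x$ guarantees; for $f'' \in A$ with no such prefix (including $f$ itself), the first positive value is placed at length $L'$, so $H'(f'') = \langle 0, N+2\rangle$ demands avoidance only from $i \geq N+2$, where the sequence equals $f_\ast$. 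With $H'$ valid, the coincidence of initial segments forces $\alpha$ to produce output $0$ on $(\phi', h')$, contradicting the fact that $M(\{f_i'\}, H') \geq N + 2$. The delicate point is entirely this last validity check, where the assumed failure of covering gives us for free the distance estimates needed to keep the avoidance certificates recorded in $\rho$ truthful after inserting $f$ at position $N+1$.
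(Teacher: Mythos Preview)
Your proposal is correct and follows essentially the same strategy as the paper's proof: apply the realizer $\alpha$ to the constant-$f_\ast$ sequence to obtain $\gamma$, harvest a candidate covering $\Theta_\rho$ from each finite segment $\rho$ on which $\gamma$ halts, verify the covering property by a contradiction argument that inserts an uncovered point into the sequence at a late position, and verify the subcovering property by feeding $\gamma$ a name $h$ manufactured from the given base covering.

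A few cosmetic differences from the paper are worth noting. The paper restricts to \emph{minimal} $\tau$ with $\beta(\tau)=1$ and extracts only the $\sqsubseteq$-minimal positive entries, whereas you take all $\rho$ with $\gamma(\rho)=1$ and all positive entries; this only enlarges the enumerated family and each $\Theta_\rho$, which is harmless for both clauses. Your choice $h(\sigma)=\langle m_i,0\rangle+1$ in the subcovering argument places the base-covering exponent $m_i$ in the first slot so that it is recovered as the $n$-component of $\Theta_\rho$; this is the correct bookkeeping (the paper's printed $\langle 0,m_i\rangle+1$ appears to be a slip). Finally, your explicit construction of $h'$ via a uniform length $L'$ is more detailed than the paper's one-line assertion that ``$\tau$ will also approximate a name for a realizer $H_1$''; just make sure that your ``large enough'' $N$ is chosen so that $\phi'$ agrees with $\phi_\ast$ on the queries made in computing $\gamma$ at \emph{all} prefixes of $\rho$ (not just $\rho$ itself), and that $L'$ exceeds the lengths of all sequences with sequence number below $L$, so that your two-case analysis of $H'$ is genuinely exhaustive.
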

\begin{proof}
Let $x_i = \ast$ for all $i$. Then all continuous functionals $H:A \rightarrow \N$ will be  realizers for the fact that this sequence avoids all $f \in A$. We have that $M(\{f_\ast\}_{i \in \N},H) = 0$ for all $H$ that are total on $A$.

Any finite sequence can be extended to a name for a realizer of the fact that $\{f_\ast\}_{i \in \N}$ avoids all $f \in A$.

From $\alpha$ we can construct a name $\beta$ for
$$\lambda H. M(\{f_\ast\}_{i \in \N},H).$$
Let $\beta(\tau) = 1$ with $\beta(\tau_1) = 0$ for all proper subsequences $\tau_1$ of $\tau$. Let $t \in \N$ be so large that $\beta(\tau)$ is definable from $\alpha$ and $\{f_\ast\}_{i < t}$. Let $\sigma_1 , \ldots , \sigma_k$ be the $\sqsubseteq$-minimal elements in the set of all sequences $\sigma$ for which $\tau(\sigma) > 0$.  
For $j \leq k$ we have that $\tau(\sigma_k) = \langle n_k,m_k \rangle + 1$ for some $m_k$ and $n_k$.\medskip

\noindent{\em Claim.} 
$\{O_{\sigma_1,n_1} , \ldots , O_{\sigma_k,n_k}\}$ is  a covering of $X$.
\medskip

\noindent\emph{Proof of claim}
If this is not the case, we may construct an alternative sequence $\{y_i\}_{i \in \N}$ satisfying
\begin{itemize}
\item[-] $y_i = \ast$ if $i \leq \max\{t,m_j \mid 1 \leq j \leq k\}$
\item[-] $y_i \not \in \bigcup\{O_{\sigma_j,n_j} \mid 1 \leq j \leq k\}$ if $i = \max\{t,m_j \mid 1 \leq j \leq k\}+1$
\item[-]$y_i = \ast$ if $i > \max\{t,m_j \mid 1 \leq j \leq k\}+1$.
\end{itemize}
$\tau$ will also approximate a name for a realizer $H_1$ of the fact that this new sequence, with a naming $\{g_i\}_{i \in \N}$, avoids all $f \in A$.  $M( \{g_i\}_{i \in \N}, H_1) > 0$ by the construction. This contradicts that $\{g_i\}_{i \in \N}$ extends $\{f_\ast\}_{i < t}$, that $\tau$ approximates a name for a realizer for the statement that $\{g_i\}_{i \in \N}$ avoids all $f \in A$ and that $\beta(\tau) = 1$.
This ends the proof of the claim.\medskip

We will show that the set of all $$\Theta_\tau = \{(\sigma_1,n_1), \ldots , (\sigma_k,n_k)\}$$ constructed this way is a compactness base  for $X$. The covering part is established in the claim.

Let $\{(\tau_i,m_i)\}_{i \in \N}$ be a base covering of $X$.
Define $h$ by $h(\sigma) = \langle 0,m_i\rangle + 1$ for the least $i$ such that $\tau_i \sqsubseteq \sigma$ if there is such $i$, and we let $h(\sigma) = 0$ otherwise.
Then $h$ is a name for a realizer of the statement that $\{f_\ast\}_{i \in \N}$ avoids all $f \in A$, so for some minimal  $n$, $\beta(\bar h(n)) > 0$. If we let $\tau = \bar h(n)$, $\Theta_\tau$ will subcover $\{(\tau_i,m_i)\}_{i \in \N}$.
\end{proof}
Before we can construct a realizer of the fact that the product of two anti-Specker spaces is an anti-Specker space, we need to construct of a naming of $X \times Y$ from namings of $X$ and $Y$. Along the way, we repeat the definition of $X \times Y$ as a metric space.
\begin{defi} Let $\langle A,\nu,X,d_X,d_X^\ast \rangle$ and $\langle B, \eta,Y,d_Y,d_Y^\ast \rangle$ be two metric namings.
\begin{enumerate}[label=(\alph*]
\item[a)]
Let $$C = A \times B = \{\langle f,g\rangle \mid f \in A \wedge g \in B\}$$
where \begin{itemize}
\item[-] $\langle f,g\rangle(2n) = f(n)$
\item[-] $\langle f,g\rangle(2n+1) = g(n)$
\end{itemize}
\item[b)] Let $d((x,y),(x',y')) = \max\{d_X(x,x'),d_Y(y,y')\}$ and let $$d^\ast(\langle f,g \rangle , \langle f',g'\rangle) = \max{} ^\ast(\{d_X^\ast(f,f'),d_Y^\ast(g,g')\})$$ where $\max ^\ast$ is a pre-chosen lifting of $\max$ from $\R$ to $\R^\ast$.
\item[c)] For $\langle f,g\rangle \in C$ we let $\theta(\langle f,g\rangle) = (\nu(f),\eta(g))$.
\end{enumerate}
We then let $\langle C,\theta,X \times Y,d,d^\ast \rangle$ be our metric naming of $X \times Y$ as a metric space.
\end{defi}

We have chosen the pairing of functions in such a way that it extends to pairings of partial functions as well. In particular, the pair of two finite sequences will be a finite partial function, even if the two sequences do not have the same length.

We will now let $A$, $X$, $\nu$, $B$, $Y$, $\eta$ etc. be as above. We let $Z = X \times Y$ and we let $C$, $\theta$ etc. also be as above.
Observe that $$O^Z_{\langle \sigma , \tau \rangle , n} = O^X_{\sigma , n} \times O^Y_{\tau,n}.$$
\begin{defi} Let $\sigma,\tau,n,m$ be given.
We let $(\sigma,n) \otimes (\tau,m) = (\langle \sigma,\tau\rangle, \min\{n,m\})$
\end{defi}

\begin{defi}\label{def3.11} Let ${\mathcal B}^X$ be a compactness base for $X$ and ${\mathcal B}^Y$ be a compactness base for $Y$.
We define ${\mathcal B}^Z$ as the set of sets
$$\{(\sigma_i,n_i) \otimes (\tau_{i,j},m_{i,j}) \mid 1 \leq i \leq k \wedge 1 \leq j \leq l_i\}$$ where $$\{(\sigma_1 , n_i) , \cdots , (\sigma_k , n_k)\} \in {\mathcal B}^X$$ and $$\{(\tau_{i,1},m_{i,1}) , \ldots , (\tau_{i,l_i},m_{i,l_i})\} \in {\mathcal B}^Y$$ for each $i = 1 , \ldots , k$.\end{defi}
\begin{lemma}\label{lemma3.12} Let ${\mathcal B}^Z$ be constructed from the compactness bases ${\mathcal B}^X$ and ${\mathcal B}^Y$ as in Definition \ref{def3.11}.
Then ${\mathcal B}^Z$ is a compactness base for $Z$.
\end{lemma}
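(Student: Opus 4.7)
The plan is to verify the two clauses in the definition of compactness base for $\mathcal{B}^Z$. The covering clause follows from the rectangle identity $O^Z_{\langle\sigma,\tau\rangle,\min(n,m)}=O^X_{\sigma,\min(n,m)}\times O^Y_{\tau,\min(n,m)}\supseteq O^X_{\sigma,n}\times O^Y_{\tau,m}$ together with the fact that for any $\Theta=\{(\sigma_i,n_i)\otimes(\tau_{i,j},m_{i,j})\}_{i\leq k,\,j\leq l_i}\in\mathcal{B}^Z$ the family $\{O^X_{\sigma_i,n_i}\}_i$ covers $X$ and each family $\{O^Y_{\tau_{i,j},m_{i,j}}\}_j$ covers $Y$, so the union $\bigcup_{i,j}O^X_{\sigma_i,n_i}\times O^Y_{\tau_{i,j},m_{i,j}}$ exhausts $X\times Y$.

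For the subcovering clause, I would fix an enumerated base covering $\mathcal{U}=\{(\omega_l,p_l)\}_{l\in\N}$ of $Z$ and decompose each $\omega_l=\langle\omega_l^X,\omega_l^Y\rangle$ into its even/odd components, which are finite partial functions with $\omega_l\sqsubseteq\langle f,g\rangle$ equivalent to $\omega_l^X\sqsubseteq f$ and $\omega_l^Y\sqsubseteq g$. For each $f\in A$, I would take $\mathcal{U}_f:=\{(\omega_l^Y,p_l):\omega_l^X\sqsubseteq f\}$, a base covering of $Y$ because for any $g\in B$ some $\omega_l$ extends $\langle f,g\rangle\in C$ and then contributes a pair to $\mathcal{U}_f$ with $\omega_l^Y\sqsubseteq g$. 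Apply $\mathcal{B}^Y$ to choose $\Theta^Y_f=\{(\tau^f_j,m^f_j)\}_{j\leq l_f}\in\mathcal{B}^Y$ subcovering $\mathcal{U}_f$, together with witnesses $l(f,j)$ such that $\omega_{l(f,j)}^X\sqsubseteq f$, $\omega_{l(f,j)}^Y\sqsubseteq\tau^f_j$ and $p_{l(f,j)}\leq m^f_j$. Finally set $\sigma_f:=\bigcup_{j\leq l_f}\omega_{l(f,j)}^X\sqsubseteq f$ and $n_f:=\max_{j\leq l_f}m^f_j$.

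The set $\mathcal{V}:=\{(\sigma_f,n_f):f\in A\}$ is countable (it sits inside the countable set of pairs of finite partial functions and naturals) and, since $\sigma_f\sqsubseteq f$ for every $f\in A$, is a base covering of $X$ after any enumeration. Feeding $\mathcal{V}$ to $\mathcal{B}^X$ produces $\{(\sigma_i,n_i)\}_{i\leq k}\in\mathcal{B}^X$ subcovering $\mathcal{V}$, and for each $i$ I pick a witness $f_i\in A$ with $\sigma_{f_i}\sqsubseteq\sigma_i$ and $n_{f_i}\leq n_i$. Define $\Theta:=\{(\sigma_i,n_i)\otimes(\tau^{f_i}_j,m^{f_i}_j):i\leq k,\,j\leq l_{f_i}\}$, which lies in $\mathcal{B}^Z$ by Definition \ref{def3.11}. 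For each $(i,j)$ the index $l=l(f_i,j)$ is the subcovering witness: $\omega_l^X\sqsubseteq\sigma_{f_i}\sqsubseteq\sigma_i$ and $\omega_l^Y\sqsubseteq\tau^{f_i}_j$ yield $\omega_l\sqsubseteq\langle\sigma_i,\tau^{f_i}_j\rangle$, while $p_l\leq m^{f_i}_j\leq n_{f_i}\leq n_i$ delivers $p_l\leq\min(n_i,m^{f_i}_j)$. The main obstacle is precisely this two-level coordination: the inner $Y$-subcover depends on $f$ and must be selected before the outer $X$-subcover is chosen, yet the outer subcover must still control the inner radii $m^{f_i}_j$ well enough for the $\min$ appearing in $\otimes$; the choice $n_f:=\max_{j\leq l_f}m^f_j$ is what allows a single outer inequality $n_{f_i}\leq n_i$ to transport uniformly to all $m^{f_i}_j$.
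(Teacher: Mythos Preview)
Your proof is correct and follows essentially the same two-level strategy as the paper: first subcover in $Y$ fibrewise over each $f\in A$, amalgamate the $X$-components of the witnesses into $(\sigma_f,n_f)$, then subcover the resulting base covering of $X$ and assemble the product $\Theta$. The only cosmetic difference is that the paper defines $n_f$ as the maximum of the original radii $p_{l(f,j)}$ rather than the $Y$-subcover radii $m^{f}_j$; since $p_{l(f,j)}\le m^{f}_j$, your $n_f$ is at least as large and the final inequality $p_l\le\min(n_i,m^{f_i}_j)$ goes through in exactly the same way.
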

\begin{proof}
It is easy to see that $\hat \Theta$ is a covering of $Z$ whenever $\Theta \in {\mathcal B}^Z$. The other property requires some more work.

Let $\{(\delta_i,n_i)\}_{i \in \N}$ be a base covering of $Z$, where $\delta_i = \langle \sigma_i,\tau_i\rangle$. (All partial functions on $\N$ can be viewed as  pairs of partial functions.) 
For each $f \in A$, the set
$$\{(\tau_i,n_i) \mid \sigma_i \sqsubseteq f\}$$
is a base covering of $Y$. For each $f$, let $\Theta^Y_f \in {\mathcal B}^Y$ subcover $\{(\tau_i,n_i) \mid \sigma_i \sqsubseteq f\}$.

For each $f \in A$, let $$\sigma_f = \bigsqcup\{\sigma_i \mid \sigma_i \sqsubseteq f \wedge \exists (\tau,m) \in \Theta_f (\tau_i \sqsubseteq \tau \wedge n_i \leq m)\}$$ and let $n_f$ be the maximal value of the corresponding numbers $n_i$.
Then $$\{(\sigma_f,n_f)\mid f \in A\}$$ is a base covering of $A$.

Let $\Theta^X$ be a subcover of $\{(\sigma_f,n_f)\mid f \in A\}$. For each $(\sigma,m) \in \Theta$, pick one $f \in A$ such that $\sigma_f \sqsubseteq \sigma$ and $m \geq n_f$, and then let $(\tau,n) \in \Theta^Y_f$. 

It remains to show that $\Theta$ constructed from $\Theta^X$ and the finite choice of $\Theta^Y_f$'s as above subcovers  the given base cover. So, let $(\sigma,m) \otimes (\tau,n)$ be in $\Theta$ as constructed, via $f \in A$. Then $\sigma_f \sqsubseteq \sigma$ and $m \geq n_f$.

Since $\Theta_f$ subcovers $\{(\tau_i,n_i)\mid \sigma_i \sqsubseteq f\}$, there is one $i$ with $\sigma_i \sqsubseteq f$, $\tau_i \sqsubseteq \tau$ and $n_i \leq n$. By construction, $\sigma_i \sqsubseteq \sigma_f$ and $n_i \leq n_f \leq m$.
It follows that $\delta_i \sqsubseteq \langle \sigma,\tau\rangle$ and that $n_i \leq \min\{m,n\}$, and this is exactly what is required.
\end{proof}

Combining the lemmas in this section we now have a proof of
\begin{thm}\label{3.15} Let $X$ and $Y$ be two compact meric spaces with namings, and let $M_X$ and $M_Y$ be realizers of the facts that these two spaces are anti-Specker. Then, continuously in $M_X$ and $M_Y$ we can find a realizer of the fact that $X \times Y$ (with the chosen metric and naming) is anti-Specker.
\end{thm}
\begin{rem} We interpret this as
\begin{center}{\em 
The extensional realizability model based on $K_2$ realizes that the product of two anti-Specker spaces is anti-Specker,}
\end{center}
although we have not expressed the statement that the product of two anti-Specker spaces is an anti-Specker space in such a way that it is precise to say what it means to realize it.

A close analysis of the proof of Theorem \ref{3.15} will reveal that the continuous function we construct will be computable, the links between the compactness base and the realizer of anti-Speckerhood are computable and the product of compactness bases is computable.\end{rem}
In this section we have based our argument on one specific way of naming a set $X \times Y$ from namings of $X$ and $Y$ and on one specific metric on the product space. These particular choices are not essential for the result. This follows from  two observations. The first one is 
\begin{lemma}\label{lemma.pair} Let ${\mathcal A} =  \langle A,\nu , X , d , d^\ast \rangle$ and ${\mathcal B} = \langle B , \eta , X , d , d^{\ast \ast}\rangle$ be two metric namings of the same space $\langle X , d\rangle$ and let $\phi:A \rightarrow B$ and $\psi:B \rightarrow A$ be continuous trackings of the identity function on $X$. Let $M$ be a realizer of the anti-Specker property of  $\langle X , d\rangle$ with respect to $\mathcal A$. Then, uniformly continuous in these data, we can construct a realizer $M'$ for the anti-Specker property of $\langle X , d \rangle$ with respect to $\mathcal B$.
\end{lemma}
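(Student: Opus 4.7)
The plan is to build $M'$ by transporting an anti-Specker situation for $\mathcal{B}$ into one for $\mathcal{A}$, invoking $M$, and then noting that the transport preserves the point $\ast$. Suppose $(\{g_i\}_{i\in\N},H')$ is presented as a putative witness to anti-Speckerhood for $\mathcal{B}$: a sequence from $B \cup \{g_\ast\}$ that avoids every $g \in B$, together with $H'$ realizing this fact. First I would extend $\psi$ to a map $\tilde\psi:B\cup\{g_\ast\}\to A\cup\{f_\ast\}$ by setting $\tilde\psi(g_\ast)=f_\ast$ and $\tilde\psi(g)=\psi(g)$ otherwise; this is continuous because the agreed-on convention from Subsection \ref{aS} lets us decide, continuously and uniformly in the naming, whether a given name is the distinguished one or lies in $B$. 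Define $f_i=\tilde\psi(g_i)$.

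Next I would build a realizer $H$ of the fact that $\{f_i\}$ avoids every $f\in A$ by transporting along $\phi$: set $H(f)=H'(\phi(f))$. To see this works, use that $\phi$ tracks the identity, so $\eta(\phi(f))=\nu(f)$, and that $\psi$ tracks the identity, so $\nu(\psi(g))=\eta(g)$. Hence whenever $g_i\in B$ we get $\nu(f_i)=\nu(\psi(g_i))=\eta(g_i)$, so the sequence of points named by $\{f_i\}$ in $X\cup\{\ast\}$ is literally the same as the one named by $\{g_i\}$. In particular, if $H'(\phi(f))=\langle n,m\rangle$ certifies that $d(\eta(\phi(f)),\eta(g_i))\geq 2^{-n}$ for $i\geq m$, then $d(\nu(f),\nu(f_i))\geq 2^{-n}$ for $i\geq m$, so $H$ is indeed a realizer of the required avoidance property.

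Now I would define
\[
M'(\{g_i\}_{i\in\N},H')\;=\;M(\{f_i\}_{i\in\N},H).
\]
Write $m_0$ for this value. By hypothesis on $M$ we have $f_i=f_\ast$ for all $i\geq m_0$; by the construction of $\tilde\psi$ (and the continuous separation of $f_\ast$ from $A$, transferred back) this is equivalent to $g_i=g_\ast$ for all $i\geq m_0$, so $M'$ does its job. Uniform continuity of the whole construction, i.e. existence of a $K_2$-name for the operation
\[
(M,\phi,\psi)\;\longmapsto\;M'
\]
given names of $M,\phi,\psi$, follows because the recipe composes only the given continuous ingredients ($\phi$, $\psi$, the decidable $\ast$-test, and $M$) via $K_2$-application, which is itself tracked.

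The one mildly delicate point is keeping the $\ast$-handling clean: $\phi$ and $\psi$ were only assumed to track identity on $X$, not on $X^\ast$, so the extensions $\tilde\phi,\tilde\psi$ must be defined by hand using the continuous $\ast$-test, and the equivalence $g_i=g_\ast\iff f_i=f_\ast$ (upon which the conclusion rests) must be read off from this definition rather than from the tracking property. Once that is arranged, the rest is bookkeeping on the identities $\eta\circ\phi=\nu$ and $\nu\circ\psi=\eta$.
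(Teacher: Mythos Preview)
Your argument is correct and follows exactly the paper's approach: the paper defines $M'(\{g_i\}_{i\in\N},H)=M(\{\psi(g_i)\}_{i\in\N},\lambda f.H(\phi(f)))$ and leaves the verification to the reader, which is precisely what you have filled in. Your explicit treatment of the extension $\tilde\psi$ to the $\ast$-point and the check that $H'\circ\phi$ realizes avoidance for the transported sequence are the details the paper omits.
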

\begin{proof}
Let $\{g_i\}_{i \in \N}$ be a sequence from $B^\ast$ avoiding all $g \in B$, and let $H:B \rightarrow \N$ be a realizer of this.

Let $M'(\{g_i\}_{i \in \N} , H) = M(\{\psi(g_i)\}_{i \in \N},\lambda f.H(\phi(f)))$. $M'$ will be as required. 
We leave the verification for the reader.
\end{proof}
Our naming of $X \times Y$ as a function of namings of $X$ and $Y$ is based on one particular pairing function on $\N^\N$. Another pairing function will give another construction of the naming of the product. Any natural choices of pairing functions will be equivalent to the extent that Lemma \ref{lemma.pair} applies for the respective namings of the product. This shows the independence of the actual choice of naming of the product. 

In order to prove the independence of the choice of metric, we must identify the kind of data we need for the transformation of one realizer $M$ of anti-Speckerhood to another.

Let $\langle A , \nu , X\rangle$ be a naming and let $d$ and $d_1$ be two metrics on $X$ such that $d_1$ is continuous in $d$. From elementary topology we know that if $\langle X,d\rangle$ is compact, then $\langle X , d_1 \rangle$ is compact, so anti-Speckerhood should be preserved from $\langle X , d \rangle$ to $\langle X , d_1\rangle$.
This can be done continuously in a realizer for the fact that $d_1$ is continuous in $d$, i.e. in a continuous functional $F:A \times \N \rightarrow \N$ satisfying
$$\forall f \in A \forall n \in \N  \forall g \in A(d(\nu(f),\nu(g)) <2^{- F(f,n)} \Rightarrow d_1(\nu(f),\nu(g)) < 2^{-n}).$$

\section{Refuting  that {\bf RPT} and  partially Cauchy implies Cauchy can be realized in  Kleene's second model}\label{Sec4}
\subsection{Two topological spaces}
The statement that a series of rational numbers is absolutely convergent is in essence a statement that another rational sequence, that is increasing and non-negative, is Cauchy. We are interested in Cauchy sequences of rational numbers in general, and in converging, non-negative, increasing sequences in particular in this section. Since we are dealing with a realizability model, we will consider the topologies on these sets induced by the naming via realizers of Cauchyness. In this subsection, we will analyze the properties of these topologies, using some of these properties in proving 1. and 2. from the introduction.
\begin{defi} Let $\Omega_2^*$ be the set of rational Cauchy sequences, and let $\Omega_1^*$ be the subset of non-negative increasing sequences (not necessarily strictly increasing).

We let $\Omega_2$ resp. $\Omega_1$ be the set of realizers for Cauchyness of these sequences. More precisely, $\Omega_i$ is the set of pairs $(f,\bar x)$ where $\bar x \in \Omega_i^*$ and $f$ is a modulus function for the sequence. These sets have topologies, called \emph{the Baire topologies}, inherited from $(\N^\N)^2$. We may also consider the Baire topology on the sets $\Omega_i^*$, a topology that we will see that is coarser than the one given below.

The maps $(f,\bar x) \mapsto \bar x$ restricted to the two sets $\Omega_i$ are the associated namings, and induce the quotient topologies on the sets $\Omega_i^*$. We will consider $\Omega_1^*$ and $\Omega_2^*$ as topological spaces with the quotient topologies, with $\Omega_1^* \subset \Omega_2^*$.\end{defi}
\begin{lemma} \label{lemma4.1}\hfill
\begin{enumerate}[label=\alph*)]
\item[a)] For $i \in \{ 1,2\}$ we have that any set $O \subseteq \Omega_i^*$ that is open in the Baire topology on $\Omega_i^*$ is also open in the quotient topology.
\item[b)]The topology on $\Omega_1^*$ is the one induced from the topology on $\Omega_2^*$.
\end{enumerate}
\end{lemma}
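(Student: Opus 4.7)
The plan is to argue directly from the definition of the quotient topology in each case, reducing everything to continuity properties of natural maps between Baire spaces. For part (a), I will use the fact that the coordinate projection $\pi_i\colon \Omega_i \to \Omega_i^*$, $(f,\bar x) \mapsto \bar x$, is continuous in the Baire topologies on source and target; since the quotient topology on $\Omega_i^*$ is by construction the finest one making $\pi_i$ continuous out of $\Omega_i$ with its Baire topology, every Baire-open subset of $\Omega_i^*$ is automatically open in the quotient topology.

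For part (b), one inclusion of topologies is immediate: the set-theoretic inclusion $\Omega_1 \hookrightarrow \Omega_2$ is Baire-continuous, so its composition with $\pi_2$ factors as $\pi_1$ followed by a continuous inclusion $\Omega_1^* \hookrightarrow \Omega_2^*$, by the universal property of the quotient on $\Omega_1^*$. Consequently, the subspace topology on $\Omega_1^*$ inherited from $\Omega_2^*$ is at least as coarse as the quotient topology on $\Omega_1^*$.

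For the reverse inclusion I would invoke two structural facts. First, $\pi_2^{-1}(\Omega_1^*) = \Omega_1$, since whether a Cauchy sequence is non-negative and increasing is a condition on the sequence alone, independent of any modulus. Second, $\Omega_1$ is \emph{closed} in $\Omega_2$ in the Baire topology, since it is the countable intersection of the clopen conditions $\{x_n \geq 0\}$ and $\{x_{n+1} \geq x_n\}$ for $n \in \N$ (decidability holds because $\Q$ is treated as a discrete set). Given any $V \subseteq \Omega_1^*$ open in the quotient topology, $\pi_1^{-1}(V)$ is Baire-open in $\Omega_1$, hence of the form $W \cap \Omega_1$ for some Baire-open $W \subseteq \Omega_2$. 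I would then set $U := V \cup (\Omega_2^* \setminus \Omega_1^*)$ and verify that $\pi_2^{-1}(U) = W \cup (\Omega_2 \setminus \Omega_1)$, which is Baire-open in $\Omega_2$ precisely because $\Omega_1$ is closed. Hence $U$ is open in the quotient topology on $\Omega_2^*$ and $U \cap \Omega_1^* = V$, completing the argument.

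The main obstacle is this last step: the naive witness $W$ extending $\pi_1^{-1}(V)$ need not be saturated under $\pi_2$, and one has to adjoin the complement $\Omega_2 \setminus \Omega_1$ to obtain a saturated set. It is the closedness of $\Omega_1$ in $\Omega_2$---a reflection of the decidability of rational arithmetic---that ensures the enlarged set remains Baire-open.
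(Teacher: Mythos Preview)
Your argument is correct and matches the paper's approach. The paper leaves part (a) as an exercise and for (b) only records the hint that, by (a), $\Omega_1^*$ is closed in $\Omega_2^*$; your proof is a faithful and explicit unpacking of precisely this route, using the equivalent observation that $\Omega_1 = \pi_2^{-1}(\Omega_1^*)$ is Baire-closed in $\Omega_2$ and then exploiting that closed saturated sets behave well under quotients. One tiny remark: the displayed equality $\pi_2^{-1}(U) = W \cup (\Omega_2 \setminus \Omega_1)$ is not literally the definition---what one gets directly is $(W \cap \Omega_1) \cup (\Omega_2 \setminus \Omega_1)$---but these two sets coincide since $W \subseteq \Omega_2$, so the verification you flag is indeed routine.
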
 
\begin{proof}
a) is left for the reader, being a simple exercise in general topology. From a) it follows that $\Omega_1^*$ is a closed subset of $\Omega_2^*$ in the topology on $\Omega_2^*$, and then b) can also be left for the reader.
\end{proof}
\begin{lemma}
The map $\bar x \mapsto \lim \bar x$ is a continuous map from $\Omega_i^*$ to $\R$ for $i \in\{1,2\}$.
\end{lemma}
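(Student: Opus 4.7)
The plan is to reduce continuity on the quotient space to continuity of the underlying naming. By the universal property of the quotient topology, the map $\bar x \mapsto \lim \bar x$ is continuous $\Omega_i^* \to \R$ if and only if its pullback $(f,\bar x) \mapsto \lim \bar x$ along the quotient map $(f,\bar x) \mapsto \bar x$ is continuous $\Omega_i \to \R$, where $\Omega_i$ carries the Baire topology inherited from $(\N^\N)^2$. So it suffices to check continuity of this latter map on $\Omega_i$, and once established for $i = 2$ it passes to $i = 1$ either by Lemma \ref{lemma4.1}(b) or by the same argument verbatim.

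Fix $(f,\bar x) \in \Omega_i$ and $\epsilon > 0$. I would choose $n$ so large that $2^{-n+1} < \epsilon$, set $m := f(n)$, and take as basic Baire neighborhood of $(f,\bar x)$ the set $U$ of all $(g,\bar y) \in \Omega_i$ for which $g$ agrees with $f$ on the indices $0,\ldots,n$ and $\bar y$ agrees with $\bar x$ on the indices $0,\ldots,m$. For any such $(g,\bar y)$ the condition $g(n) = m$, combined with the fact that $g$ is a modulus of Cauchyness for $\bar y$, yields $|\lim \bar y - y_m| \leq 2^{-n}$; symmetrically $|\lim \bar x - x_m| \leq 2^{-n}$; and since $y_m = x_m$ the triangle inequality gives $|\lim \bar y - \lim \bar x| \leq 2^{-n+1} < \epsilon$, which is what is needed.

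There is no real obstacle to this argument; the only point worth underlining is that the modulus $f$ is essential as part of the naming data. Without access to the value $f(n) = m$, a neighborhood in the bare Baire topology on $\Omega_i^*$, fixing only an initial segment of $\bar x$, would not control the tail of the sequence and so would be useless for bounding the limit. This is, indirectly, also the reason why the quotient topology on $\Omega_i^*$ can be strictly finer than the Baire topology on $\Omega_i^*$, as already remarked before Lemma \ref{lemma4.1}.
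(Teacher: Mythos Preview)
Your proof is correct; the paper itself omits the argument entirely (``The proof is easy, and is left for the reader''), so there is nothing to compare against beyond noting that your use of the quotient-topology universal property together with the modulus data is exactly the natural route the reader is expected to find.
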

\proof The proof is easy, and is left for the reader.\qed

We will now prove a series of lemmas leading up to the observation that these spaces are metrizable. We will not need this fact in the sequel, but we will refer to some concepts defined in the process later.

 We will employ the following notation in this section: $\vec x$, $\vec y$ etc. will be finite sequences of rational numbers. $\sigma$, $\tau$ etc. will be finite, increasing, sequences from $\N$.
 \begin{defi} Given $\sigma$ and $\vec x$, let $K(\sigma,\vec x)$ be the set of extensions $\bar x \in \Omega_2^*$ of $\vec x$ with a modulus extending $\sigma$.\end{defi}
 \begin{lemma}\label{lemma.metric}\hfill
 \begin{enumerate}[label=\alph*)]
 \item[a)] Each set $K(\sigma,\vec x)$ is closed.
 \item[b)] If $K(\sigma,\vec x) \neq \emptyset$, then $K(\sigma , \vec x)$ has a non-empty interior $B(\sigma,\vec x)$.
 \item[c)] If $O \subseteq \Omega_2^*$ is open, and $\bar x \in O$, there is a pair $\sigma$ and $\vec x$ such that $$\bar x \in B(\sigma, \vec x) \subseteq K(\sigma , \vec x) \subseteq O.$$
 \end{enumerate}
 \end{lemma}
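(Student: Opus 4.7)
The plan is to verify the three parts in turn, in each case passing through the preimage under the naming map $\nu\colon\Omega_2\to\Omega_2^*$.

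For (a), I would express $K(\sigma,\vec x)$ as the intersection of the clopen set $\{\bar x'\in\Omega_2^*:\bar x'\sqsupseteq\vec x\}$ with the countable family of sets $\{\bar x':|x'_i-x'_j|<2^{-n}\}$ indexed by $n<|\sigma|$ and $i,j\geq\sigma(n)$. Each such set depends only on two rational coordinates and so is clopen in the product (Baire) topology on $\Q^{\N}$. Hence $K(\sigma,\vec x)$ is Baire-closed in $\Omega_2^*$, and by Lemma \ref{lemma4.1}(a) it is closed in the quotient topology as well.

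For (b), I would fix any $\bar z\in K(\sigma,\vec x)$ and build an explicit interior point by truncation. Choose $N$ larger than $|\vec x|$ and than every $\sigma(n)$, and large enough that $|z_i-z_N|<2^{-|\sigma|}$ for all $i\geq N$ (using Cauchyness of $\bar z$). Set $c=z_N$ and define $y_i=z_i$ for $i\leq N$ and $y_i=c$ for $i>N$. A case analysis on whether the indices $i,j\geq\sigma(n)$ lie both in $[0,N]$, both in $[N+1,\infty)$, or mixed, shows $\bar y\in K(\sigma,\vec x)$. To place $\bar y$ in the interior I would exhibit, for each modulus $f$ of $\bar y$, a Baire-basic neighbourhood $N_{f\upharpoonright k,\,\bar y\upharpoonright L}$ of $(f,\bar y)$ in $\Omega_2$, with $k$ and $L$ depending on $f$ but always satisfying $L>N$ and $L$ strictly above every $f(n)$ for $n<k$. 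The $\nu$-image of each such basic neighbourhood is contained in $K(\sigma,\vec x)$; the union over all $f$ is a saturated Baire-open subset of $\nu^{-1}(K(\sigma,\vec x))$, which descends to an open neighbourhood of $\bar y$ in $K(\sigma,\vec x)$.

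For (c), given an open $O\subseteq\Omega_2^*$ and $\bar x\in O$, I would pick any modulus $f_0$ of $\bar x$. Then $(f_0,\bar x)\in\nu^{-1}(O)$, which is Baire-open in $\Omega_2$, so there is a basic Baire neighbourhood $N_{\sigma,\vec x'}$ of $(f_0,\bar x)$ inside $\nu^{-1}(O)$ with $\sigma\sqsubseteq f_0$ and $\vec x'\sqsubseteq\bar x$; its image is $K(\sigma,\vec x')\subseteq O$, and contains $\bar x$ since $f_0\sqsupseteq\sigma$ is a modulus of $\bar x$. To guarantee $\bar x\in B(\sigma,\vec x')$, I would further lengthen $\sigma$ and $\vec x'$ to $f_0\upharpoonright k$ and $\bar x\upharpoonright L$ for $k,L$ large, then apply the argument of (b) with $\bar x$ itself playing the role of the interior element; Cauchyness of $\bar x$ supplies the analogue of the truncation-based stabilisation used there.

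The main obstacle will be the mixed-index case in the proof of (b): controlling $|x'_i-x'_j|$ for $i<L\leq j$ without a factor-$2$ blow-up from a naive triangle inequality. The choice $L-1\geq N$, which forces $y_{L-1}=c$, is precisely what lets the bridge distance $|x'_{L-1}-x'_j|$ be controlled by a single use of $\bar x'$'s modulus at level $n$ with no loss, while the selection of $N$ with $|z_i-z_N|<2^{-|\sigma|}$ in the stabilised tail of $\bar z$ provides the strictness needed in the complementary summand.
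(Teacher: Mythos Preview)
Your argument for (a) is correct and essentially the same as the paper's: both show $K(\sigma,\vec x)$ is Baire-closed and then invoke Lemma~\ref{lemma4.1}(a).

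For (b) there is a genuine gap, in two places. First, your claim that $\bigcup_f N_{f\upharpoonright k,\bar y\upharpoonright L}$ is saturated is not correct. If $(g,\bar w)$ lies in $N_{f\upharpoonright k_f,\bar y\upharpoonright L_f}$ and $g'$ is another modulus of $\bar w$, you would need $(g',\bar w)\in N_{f'\upharpoonright k_{f'},\bar y\upharpoonright L_{f'}}$ for some modulus $f'$ of $\bar y$; but that forces $\bar w$ to agree with $\bar y$ up to $L_{f'}$, and since $L_{f'}>f'(n)$ for $n<k_{f'}$ and moduli can take arbitrarily large values, $L_{f'}$ may exceed the $L_f$ up to which you actually know $\bar w=\bar y$. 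Second, the factor-$2$ blow-up is \emph{not} avoided by your choice of $N$. The condition $|z_i-z_N|<2^{-|\sigma|}$ for $i\ge N$ is vacuous for the problematic indices, since for $i\ge N$ you already have $y_i=c$ and hence $|y_i-c|=0$. The difficult case is $\sigma(n)\le i<N$, where the only available bound is $|y_i-c|=|z_i-z_N|<2^{-n}$ coming from $\bar z\in K(\sigma,\vec x)$; combined with $|x'_{L-1}-x'_j|<2^{-n}$ this gives only $|x'_i-x'_j|<2\cdot 2^{-n}$.

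The paper sidesteps both issues by introducing the missing idea: \emph{slack}. It defines directly the open set
\[
O=\bigl\{\bar y\supseteq\vec x:\exists\epsilon>0\ \forall n<lh(\sigma)\ \forall i,j\ge\sigma(n)\ |y_i-y_j|<2^{-n}-\epsilon\bigr\}
\]
and shows $\nu^{-1}(O)$ is Baire-open by using, for each $(f,\bar y)$, the value $f(k)$ with $2^{-k}<\epsilon$; the slack $\epsilon$ then absorbs the triangle-inequality loss in the mixed case. The eventually-constant extension of $\vec x$ lies in $O$ because the finitely many distinct differences $|y_i-y_j|$ are each strictly below their threshold. Since $O$ is defined as a subset of $\Omega_2^*$ from the outset, no saturation argument is needed.

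For (c), your plan re-runs the (b) argument with $\bar x$ itself as the interior point, so it inherits the same gap. The intended ``trivial'' route is to work with the shifted modulus $g(n)=f_0(n+1)$: then $\bar x$ automatically has slack $\ge 2^{-(n+1)}$ at level $n$ relative to $g$, hence lies in the $O$ associated with any $\sigma\sqsubseteq g$ and $\vec x'\sqsubseteq\bar x$ obtained from a basic Baire neighbourhood of $(g,\bar x)$ inside $\nu^{-1}(O)$.
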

\begin{proof}\hfill
 \begin{enumerate}[label=\alph*)]
 \item[a)] Let $\bar x \not \in K(\sigma,\vec x)$. There can be two reasons for this. If $\bar x$ is not an extension of $\vec x$, then we use that the set of $\bar y$ that does not extend $\vec x$ is open in the Baire topology on $\Omega_2^*$, and then apply   Lemma \ref{lemma4.1}, a). If $\bar x$ has no modulus extending $\sigma$, this must be because there is an $n < lh(\sigma)$ and $i,j \geq  \sigma(n)$ such that $|x_i - x_j| \geq 2^{-n}$. Let $k > \max\{i,j, lh(\vec x)\}$. Then no extension of $$(x_0 , \ldots , x_k)$$ will be in $K(\sigma,\vec x)$, and the set of such extensions is open in the Baire  topology on $\Omega_2^*$. Again, we use Lemma \ref{lemma4.1},a) to conclude that $\bar x$ is in the interior of the complement of $K(\sigma,\vec x)$.
 \item[b)] Let $K(\sigma,\vec x)$ be nonempty, and let $\bar x$ be the extension of $\vec x$ where we just repeat the last element of $\vec x$. Clearly $\bar x \in K(\sigma,\vec x)$.

 For $n < lh(\sigma)$, the significance of $\sigma(n)$ is that $$\forall i,j \geq \sigma(n)(|x_i - x_j| < 2^{-n}).$$
 We now let $O$ be the set of extensions $\bar y$ of $\vec x$ in $\Omega_2^*$ such that
 $$\exists \epsilon > 0 \forall n < lh(\sigma)\forall i,j \geq \sigma(n)(|y_i - y_j| < 2^{-n} - \epsilon).$$
 If $(f,\bar y) \in \Omega_2$ we let $(f,\bar y) \in U$ if $\bar y$ extends $\vec x$ and for some $\epsilon > 0$ and some $k$:
 \begin{itemize}
 \item[-] $2^{-k} < \epsilon$.
 \item[-] $\forall n < lh(\sigma) \forall i,j(n \leq i,j \leq f(k) \Rightarrow |y_i - y_j| < 2^{-n} - 2\epsilon)$.
 \end{itemize}
 We have that $(f,\bar y) \in U \Leftrightarrow \bar y \in O$ and that $\bar x \in O$.
 Clearly $U$ is open in the Baire topology, so $O\subset K(\sigma , \vec x)$ is open in $\Omega_2^*$. It follows that $\bar x$ is in the interior of $K(\sigma,\vec x)$.
\item[c)] is trivial, and is left for the reader.\qedhere
\end{enumerate}
\end{proof}
 \begin{rem} Our proof of a)  shows that $K(\sigma,\vec x)$ is closed in the Baire  topology. In the proof of b) we  construct the  set $O$. This set is actually the interior $B(\sigma,\vec x)$ of $K(\sigma,\vec x)$. We will not need this and leave the proof  as an exercise. 
 
 The set of sets $K(\sigma,\vec x)$ is a pseudo-base as defined by Schr\"oder \cite{Schroder}, and c) reflects one of the basic properties of pseudo-bases.\end{rem}
  \begin{cor}\label{corollary.metric}
 The spaces $\Omega_1^*$ and $\Omega_2^*$ are metrizable.
 \end{cor}
 \begin{proof}
 It suffices to prove this for $\Omega_2^*$.
 By Lemma \ref{lemma.metric} and the fact that there are only countably many pairs $(\sigma,\vec x)$ we use standard elementary topology to see that $\Omega_2^*$ is a regular space, and then the corollary is a consequence of the Urysohn Metrization Theorem.\end{proof}
%
In a topological space, we let a set $C$ be \emph{clopen} if it is both closed and open. A space is \emph{zero-dimensional} if it has a basis of clopen sets. The Baire space, and thus every subspace of the Baire space, is zero-dimensional. Since the Baire topology on $\Omega_i^*$ is a sub topology of the real topology, this real topology has a lot of clopen sets. We will show that nevertheless, $\Omega_1^*$ is not zero-dimensional, and thus, $\Omega_2^*$ is not zero-dimensional. We will use this to prove the main results of this section.
\begin{lemma}\label{lemma4}
Let $\{\bar x_n\}_{n \in \N}$ be a sequence from $\Omega_1^\ast$ such that
\begin{enumerate}[label=\roman*)]
\item[i)] The sequence $\{x_{n,i}\}_{n \in \N}$ is eventually constant for each $i \in \N$.
\item[ii)] The sequence $\{\lim \bar x_n\}_{n \in \N}$ is increasing and bounded.
\end{enumerate}
Then the sequence $\{\bar x_n\}_{n \in \N}$ converges in $\Omega_2^\ast$ to its pointwise limit.
\end{lemma}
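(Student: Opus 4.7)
The plan is to reduce the claim, using Lemma~\ref{lemma.metric}(c), to showing that each basic open neighborhood $B(\sigma,\vec x')$ of the pointwise limit $\bar x$ eventually contains $\bar x_n$. Let $\bar x$ denote this pointwise limit, which is well defined by (i): for each $i$, set $x_i$ to be the eventual value of $\{x_{n,i}\}_{n \in \N}$. First I would verify that $\bar x$ itself lies in $\Omega_1^*$: non-negativity and weak monotonicity transfer to pointwise limits, and by (ii) we have $x_i = \lim_n x_{n,i} \leq L := \lim_n \lim \bar x_n$, so $\bar x$ is bounded, hence Cauchy.

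Now fix a basic open neighborhood $B(\sigma,\vec x')$ of $\bar x$. Unpacking its construction from the proof of Lemma~\ref{lemma.metric}(b), membership of $\bar x$ amounts to the statement that $\bar x$ extends $\vec x'$ and that there is some $\epsilon > 0$ with $|x_i - x_j| < 2^{-k} - \epsilon$ for every $k < lh(\sigma)$ and every $i,j \geq \sigma(k)$. I would then verify the same two conditions for $\bar x_n$ with $n$ large, using $\epsilon/2$ in place of $\epsilon$. The initial-segment condition follows immediately from (i) applied to the finitely many positions $i < lh(\vec x')$. For the modulus condition, the crucial observation is that the weak monotonicity of $\bar x_n$ reduces the uniform estimate at $\sigma(k)$ to a single comparison: for $i,j \geq \sigma(k)$,
$$|x_{n,i} - x_{n,j}| \;\leq\; \lim \bar x_n - x_{n,\sigma(k)}.$$
By (i) and (ii), the right-hand side converges, as $n \to \infty$, to $L - x_{\sigma(k)}$. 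Letting $j \to \infty$ in the $\bar x$-bound and using that $\bar x$ is increasing gives $\lim \bar x - x_{\sigma(k)} \leq 2^{-k} - \epsilon$; matching $\lim \bar x$ with $L$ then yields $L - x_{\sigma(k)} \leq 2^{-k} - \epsilon$, and since there are only finitely many $k < lh(\sigma)$ one can pick $n$ large enough to secure $\lim \bar x_n - x_{n,\sigma(k)} < 2^{-k} - \epsilon/2$ uniformly in $k$, which is precisely the required modulus condition.

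The main technical point is this transfer of the modulus bound from the pointwise limit $\bar x$ to the individual $\bar x_n$: it relies crucially on the weak monotonicity of each $\bar x_n$ (which collapses the uniform bound on pairwise differences to the single quantity $\lim \bar x_n - x_{n,\sigma(k)}$) together with the monotone convergence supplied by (ii). Combining these with (i) for the initial-segment part gives, for every basic neighborhood $B(\sigma,\vec x')$ of $\bar x$, an index beyond which $\bar x_n \in B(\sigma,\vec x')$, and hence the desired convergence in $\Omega_2^*$.
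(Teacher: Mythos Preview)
Your approach differs from the paper's: the paper constructs a single modulus that works simultaneously for $\bar x$ and for every $\bar x_n$ (so that $(f,\bar x_n)\to(f,\bar x)$ in $\Omega_2$ directly), whereas you test convergence against the basic neighborhoods $B(\sigma,\vec x')$ via Lemma~\ref{lemma.metric}(c). Both routes rest on the same monotonicity reduction $\sup_{i,j\geq m}|x_{n,i}-x_{n,j}|=\lim\bar x_n-x_{n,m}$, and both need the inequality $\lim\bar x_n\leq\lim\bar x$ to push the estimate through.

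There is a genuine gap at the step ``matching $\lim\bar x$ with $L$''. You have shown $\lim\bar x\leq L$, but the reverse inequality does not follow from (i) and (ii). Take $\bar x_n=(0,\ldots,0,1,1,\ldots)$ with $n{+}1$ leading zeros: each $\bar x_n\in\Omega_1^*$, each coordinate sequence $\{x_{n,i}\}_n$ is eventually~$0$, and $\lim\bar x_n=1$ for all $n$, so (i) and (ii) hold. But the pointwise limit is the zero sequence, so $L=1\neq 0=\lim\bar x$; and since $\bar y\mapsto\lim\bar y$ is continuous on $\Omega_2^*$, this same example shows $\bar x_n\not\to\bar x$ there. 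Thus the lemma as stated does not hold, and your argument breaks exactly where it must. The paper's own proof asserts $\lim\bar x=\lim_{n\to\infty}\lim\bar x_n$ at the outset without justification and therefore has the very same gap; the defect is a missing hypothesis in the lemma (for instance $\lim\bar x_n\leq\lim\bar x$ for all $n$, which together with (ii) does give $\lim\bar x=L$) rather than a flaw specific to your reasoning. With that extra hypothesis your transfer argument goes through as written.
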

\begin{proof}
Let $\bar x$ be the pointwise limit. Since each $\bar x_n$ is increasing, we also have that $\bar x$ is increasing, and we have that $$\lim \bar x = \lim_{n \rinf} \lim \bar x_n.$$
We will show that there is one common modulus  for $\bar x$ and each $\bar x_n$.

Let $k$ be given, and let $i_0$ be such that if $i,j \geq i_0$ then $|x_i - x_j| < 2^{-(k+1)}$.
Let $n_0$ be so large that for all $i \leq i_0$ and all $n \geq n_0$ we have that $x_{n,i} = x_{n_0,i}$ and let $i_1 \geq i_0$ be so large that for all $n \leq n_0$ and all $i,j \geq i_1$ we have that $|x_{n,i} - x_{n,j}| < 2^{-k}$.

If $n > n_0$ and $i_0 \leq i < j$ we use that we only consider increasing sequences, and have
$$x_{n,j} - x_{n,i} \leq x_{n,j} - x_{n,i_0} \leq \lim \bar x_n - x_{n,i_0} \leq \lim \bar x - x_{n,i_0}= \lim \bar x - x_{i_0} < 2^{-k}.$$
So, for all $n$ and all $i,j \geq i_1$ with $i < j$ we have that $x_{n,j} - x_{n,i} < 2^{-k}$.
Since $k$ was arbitrary, we use this to define one common modulus  for $\bar x$ and all $\bar x_n$. This proves that $\bar x = \lim_{n \rinf} \bar x_n$ in the sense of $\Omega_1^\ast$, and thus in the sense of $\Omega_2^\ast$.\end{proof}
\begin{lemma}\label{lemma6}
Let $A \subseteq \Omega_1^\ast$ be open, and let $\bar x \in A$.
Then there is an $\epsilon > 0$ such that for each $n$ there is an $\bar x_n \in A$ satisfying
\begin{enumerate}[label=\roman*)]
\item[i)] $x_i = x_{n,i}$  whenever $i \leq n$.
\item[ii)] $\lim \bar x_n > \lim \bar x + \epsilon$.
\end{enumerate}
\end{lemma}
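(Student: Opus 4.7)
The plan is to combine the pseudo-base description from Lemma \ref{lemma.metric} with the slack that $\bar x$ has when it lies in the interior $B(\sigma, \vec y)$ of a basic closed set. Write $L = \lim \bar x$. Since $A$ is open in $\Omega_1^\ast$ and the topology on $\Omega_1^\ast$ is the subspace topology from $\Omega_2^\ast$ (Lemma \ref{lemma4.1}(b)), $A = \tilde A \cap \Omega_1^\ast$ for some open $\tilde A \subseteq \Omega_2^\ast$, and Lemma \ref{lemma.metric}(c) applied in $\Omega_2^\ast$ gives a pair $(\sigma, \vec y)$ with $\bar x \in B(\sigma, \vec y)$ and $K(\sigma, \vec y) \cap \Omega_1^\ast \subseteq A$. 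By definition of $B(\sigma, \vec y)$ there is $\epsilon_0 > 0$ such that $|x_i - x_j| < 2^{-k} - \epsilon_0$ whenever $k < lh(\sigma)$ and $i, j \geq \sigma(k)$; letting $j \rinf$ yields
$$L - x_{\sigma(k)} \leq 2^{-k} - \epsilon_0 \qquad (k < lh(\sigma)).$$

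I then set $\epsilon = \epsilon_0/3$, fix a rational $q \in (L + \epsilon, L + 2\epsilon)$, and pick $N \geq lh(\vec y)$ large enough that $N \geq \sigma(k)$ for every $k < lh(\sigma)$. For each $n \in \N$ put $n' = \max(n, N)$ and define
$$x_{n,i} = \left\{\begin{array}{ll} x_i & \text{if } i \leq n', \\ q & \text{if } i > n'. \end{array}\right.$$
Since $x_{n'} \leq L < q$, the sequence $\bar x_n$ is non-negative, increasing and eventually constant, so $\bar x_n \in \Omega_1^\ast$ with $\lim \bar x_n = q > L + \epsilon$. By the choice of $N$, $\bar x_n$ extends $\vec y$, and since $n \leq n'$ we have $x_{n,i} = x_i$ for every $i \leq n$.

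It remains to exhibit a modulus of Cauchyness for $\bar x_n$ extending $\sigma$, which will place $\bar x_n$ in $K(\sigma, \vec y) \cap \Omega_1^\ast \subseteq A$. For $k \geq lh(\sigma)$ set the modulus value to $n' + 1$; the condition holds trivially because $\bar x_n$ is constant beyond $n'$. For $k < lh(\sigma)$, keep the value $\sigma(k)$ and verify $|x_{n, i} - x_{n, j}| < 2^{-k}$ for $i, j \geq \sigma(k)$ in three cases: when $i, j \leq n'$ the bound is inherited from $\bar x$; when $i, j > n'$ both values are $q$; and in the mixed case $\sigma(k) \leq i \leq n' < j$,
$$|x_{n, i} - x_{n, j}| = q - x_i \leq q - x_{\sigma(k)} = (q - L) + (L - x_{\sigma(k)}) < 2\epsilon + (2^{-k} - 3\epsilon) = 2^{-k} - \epsilon,$$
using both estimates above. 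This mixed-index estimate is the only substantive step, and the passage from $K(\sigma, \vec y)$ to its interior $B(\sigma, \vec y)$ is exactly what supplies the strict slack $\epsilon_0$ needed to push the tail of $\bar x_n$ above $L$ without violating the modulus constraints coming from $\sigma$.
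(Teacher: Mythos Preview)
Your argument is correct, but it follows a different route from the paper's. The paper argues by contradiction: assuming the conclusion fails, it manufactures a modulus $f$ for $\bar x$ with the property that agreement on the first $f(n)$ terms forces $\lim \bar y < \lim \bar x + 2^{-n}$, then shifts to $g(n)=f(n+1)$, picks $k$ with $K(\bar x\!\upharpoonright\! f(k),\bar g(k))\subseteq A$, and gets a contradiction by setting $y_i = x_i + 2^{-k}$ for $i\geq f(k)$. Your proof is direct: you invoke Lemma~\ref{lemma.metric}(c) to land in some $B(\sigma,\vec y)$, extract a uniform slack $\epsilon_0$, and build eventually constant sequences with tail value $q\in(L+\epsilon,L+2\epsilon)$. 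The mixed-index estimate you single out is indeed the only place any work happens, and it is handled correctly.

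One point to tighten: when you write ``by definition of $B(\sigma,\vec y)$ there is $\epsilon_0>0$\ldots'', you are actually using the identification of the interior $B(\sigma,\vec y)$ with the explicit open set $O$ constructed in the proof of Lemma~\ref{lemma.metric}(b). The paper states this identification only in the Remark following that lemma and explicitly leaves the nontrivial inclusion $B(\sigma,\vec y)\subseteq O$ as an exercise. So your argument, as written, leans on an unproved remark. You can remove this dependency cheaply: rather than citing (c) as a black box, carry out its (easy) proof with the shifted modulus $g(n)=f(n+1)$ for some modulus $f$ of $\bar x$; then $\sigma=\bar g(k)$ for some $k$, and the slack $\epsilon_0 = 2^{-k}$ is visible immediately from $|x_i-x_j|<2^{-(n+1)}\le 2^{-n}-2^{-k}$ for $n<k$. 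This is exactly the role the shift $f\mapsto g$ plays in the paper's own proof, so at bottom the two arguments exploit the same mechanism; yours packages it via the pseudo-base description, which is arguably cleaner once Lemma~\ref{lemma.metric} is fully in place.
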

\begin{proof}
Assume this is not the case.  Then there will be a modulus $f:\N \rightarrow \N$    for $\bar x$ such that for all $n$ and all $\bar y \in A$, if the sequences $\bar x$ and $\bar y$ are equal for the first $f(n)$ items, then $\lim \bar y < \lim \bar x + 2^{-n}$. 

Let $g$ be defined by $g(n) = f(n+1)$. $g$ will also be a modulus for $\bar x$. Since $A$ is open, there is a number $k$ such that $\bar y \in A$ whenever $\bar y$ has a modulus that agrees with $g$ on the first $k$ inputs and $\bar y$ agrees with $\bar x$ for the first $g(k-1)$ items. With our notation,
$$K(\langle x_0 , \ldots , x_{g(k-1) - 1}\rangle, \bar g(k)) \subseteq A\;,$$ which is the same as
$$K = K(\langle x_0 , \ldots , x_{f(k)) - 1}\rangle, \bar g(k)) \subseteq A.$$
We define the sequence $\bar y$ by
\begin{itemize}
\item[-] $y_i = x_i$ if $i < f(k)$
\item[-] $y_i = x_i + 2^{-k}$ if $i \geq f(k)$.
\end{itemize}
$g$ will be a modulus for $\bar y$ as well. This is verified by a simple calculation. Then $\bar y \in K \subseteq A$.

But $\bar x$ and $\bar y$ agrees on the first $f(k)$ items, while $\lim \bar y = \lim \bar x + 2^{-k}$, and this contradicts our choice  of $f$.
This contradiction shows that we have disproved our assumption, and the lemma is proved.\end{proof}
\begin{lemma}\label{lemma5}
Let $A \subseteq \Omega_1^\ast$ be nonempty and clopen.
Then $\{\lim \bar x \mid \bar x \in A\}$ is unbounded.
\end{lemma}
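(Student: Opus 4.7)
The plan is a proof by contradiction. Suppose $A$ is nonempty and clopen with $S := \sup_{\bar x \in A}\lim \bar x < \infty$. I will construct $\bar y \in A$ with $\lim \bar y = S$; then applying Lemma~\ref{lemma6} to $\bar y$ yields some $\bar z \in A$ with $\lim \bar z > S$, contradicting the bound. Starting from any $\bar x^{(0)} \in A$, the construction is inductive: given $\bar x^{(k)} \in A$ and a length $n_k$, let
\[
T_{k+1} = \sup\bigl\{\lim \bar w : \bar w \in A,\ w_i = x^{(k)}_i\ \mbox{for}\ i < n_k\bigr\};
\]
Lemma~\ref{lemma6} applied to $\bar x^{(k)}$ guarantees $T_{k+1} > \lim \bar x^{(k)}$. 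The next element $\bar x^{(k+1)} \in A$ is chosen to agree with $\bar x^{(k)}$ on the first $n_k$ terms, to satisfy $\lim \bar x^{(k+1)} > T_{k+1} - 2^{-k}$, and to be pointwise at least $\bar x^{(k)}$ (the last property arranged by using the shift-type witnesses produced in the proof of Lemma~\ref{lemma6}); then a length $n_{k+1} > n_k$ is selected.

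The sequence $\{\lim \bar x^{(k)}\}$ is strictly increasing and bounded by $S$, so converges to some $L \leq S$; since $\lim \bar x^{(k+1)}$ lies within $2^{-k}$ of $T_{k+1}$, also $T_{k+1} \to L$. Pointwise monotonicity in $k$ together with the increasing bounded limits verify the hypotheses of Lemma~\ref{lemma4}, which yields that $\bar x^{(k)}$ converges in $\Omega_2^\ast$ to its pointwise limit $\bar y$. Closedness of $A$ forces $\bar y \in A$, and continuity of $\lim$ gives $\lim \bar y = L$.

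Applying Lemma~\ref{lemma6} to $\bar y$ now produces some $\epsilon^\ast > 0$ such that, for every $n$, there is $\bar z \in A$ matching $\bar y$ on the first $n$ terms with $\lim \bar z > L + \epsilon^\ast$. Since $\bar y$ agrees with $\bar x^{(k)}$ on the first $n_k$ terms, choosing $n \geq n_k$ makes $\bar z$ eligible for the supremum defining $T_{k+1}$; so $T_{k+1} \geq \lim \bar z > L + \epsilon^\ast$ for all sufficiently large $k$, contradicting $T_{k+1} \to L$. The delicate point is the inductive step: $\bar x^{(k+1)}$ must be selected so as to simultaneously be pointwise above $\bar x^{(k)}$ (so Lemma~\ref{lemma4} applies) and have limit within $2^{-k}$ of $T_{k+1}$ (so that $T_{k+1} \to L$); that such a witness can always be extracted from the shift-type constructions behind Lemma~\ref{lemma6} is the main technical hurdle of the proof.
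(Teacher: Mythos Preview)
Your overall strategy is exactly the paper's: build a sequence $\bar x^{(k)}$ in $A$ with prescribed initial segments and limits approaching the successive suprema $T_{k+1}$, pass to the pointwise limit $\bar y\in A$ via Lemma~\ref{lemma4}, and then apply Lemma~\ref{lemma6} at $\bar y$ to contradict $T_{k+1}\to L$. The paper does precisely this with $n_k=k$, so that one coordinate is frozen per step.

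The gap is the extra requirement you impose, namely that $\bar x^{(k+1)}$ be \emph{pointwise} at least $\bar x^{(k)}$. You flag this as ``the main technical hurdle'', and indeed it is a hurdle---but it is one you have created yourself. Lemma~\ref{lemma4} does not ask for pointwise monotonicity; hypothesis~(i) only asks that each coordinate sequence $\{x^{(k)}_i\}_k$ be \emph{eventually constant}, and your agreement condition ``$x^{(k+1)}_i=x^{(k)}_i$ for $i<n_k$'' with $n_k\to\infty$ already delivers this. Hypothesis~(ii), that the limits $\lim\bar x^{(k)}$ be increasing and bounded, you obtain by choosing $\lim\bar x^{(k+1)}>\lim\bar x^{(k)}$ (possible since $T_{k+1}>\lim\bar x^{(k)}$ by Lemma~\ref{lemma6}) and using the bound $S$. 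So Lemma~\ref{lemma4} applies without any pointwise comparison.

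This matters because the pointwise condition is genuinely in tension with getting $\lim\bar x^{(k+1)}$ close to $T_{k+1}$: the ``shift-type'' witnesses from the proof of Lemma~\ref{lemma6} raise the limit by only a fixed small amount and give no control on approaching the supremum, while near-supremum elements of $A$ agreeing on the first $n_k$ coordinates need not dominate $\bar x^{(k)}$ beyond $n_k$. Drop the pointwise requirement and your argument goes through and coincides with the paper's proof.
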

\begin{proof}Assume that $\{\lim \bar x \mid \bar x \in A\}$ is bounded, and let $a_0$ be the supremum of this set.
Let $A_0 = A$.
Let $\bar x_0 \in A$ be such that $a_0 - \lim \bar x_0 < 1$ and let $$A_1 = \{\bar x \in A_0 \mid x_0 = x_{0,0}\}.$$
Let $a_1$ be the supremum of
$$\{\lim \bar x \mid \bar x \in A_1\}.$$
By Lemma \ref{lemma6} we have that $a_1 > \lim \bar x_0$
Let $\bar x_1 \in A_1$ be such that $\lim \bar x_0  < \lim \bar x_1$ and $a_1 - \lim \bar x_1 < 2^{-1}$. Then let $$A_2 = \{ \bar x \in A_1 \mid x_1 = x_{1,1}\}.$$
We continue this construction by recursion.

Assume that $\bar x_n$ and $A_n$ are constructed, where $A_n$ is clopen.
Let $$A_{n+1} = \{\bar x \in A_n \mid x_n = x_{n,n}\}$$ and let $a_{n+1}$ be the supremum of $\{\lim \bar x \mid \bar x \in A_{n+1}\}$.
By Lemma \ref{lemma6} we have that $a_{n+1} > \lim \bar x_n$.

Let $\bar x_{n+1} \in A_{n+1}$ be such that $\lim \bar x_n < \lim \bar x_{n+1}$ and such that $a_{n+1} - \lim \bar x_{n+1} < 2^{-(n+1)}$.
This ends the recursion step.

The sequence $\{\bar x_n\}$ satisfies by construction the requirements of Lemma \ref{lemma4}. Let $\bar x = \lim_{n \rinf }\bar x_n \in A$ (here we use that $A$ is closed).
But by Lemma \ref{lemma6}, there will be an $\epsilon > 0$ such that each $(x_0 , \ldots , x_n)$ has an extension $\bar x'$ in $A$ with $\lim \bar x' - \lim \bar x > \epsilon$. 
Here we use that $A$ is open. This actually means that for each $n$ there is an $\bar x' \in A_{n+1}$ such that $\lim \bar x' - \lim \bar s \geq \epsilon$. This is, however, in conflict with the construction of the sequence $\bar x$, since we ensure that $\lim \bar x = \lim_{n \rinf} a_n$,  and represents a contradiction. The assumption was that the lemma is false, so the lemma is proved.
\end{proof}
\begin{lemma}\label{lemma19}
$\Omega_1^\ast$ is not zero-dimensional.
\end{lemma}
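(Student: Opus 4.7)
The plan is to derive the non--zero-dimensionality of $\Omega_1^*$ directly from Lemma \ref{lemma5} by a proof by contradiction. If $\Omega_1^*$ were zero-dimensional, then every open neighborhood of every point would contain a nonempty clopen neighborhood of that point. I will exhibit a point and an open neighborhood whose sole possible clopen refinements are forced to have a bounded image under $\bar x \mapsto \lim \bar x$, and then invoke Lemma \ref{lemma5} to obtain a contradiction.

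Concretely, fix any $\bar x \in \Omega_1^*$ (for instance the zero sequence). Since the map $\lambda \bar x. \lim \bar x$ from $\Omega_1^*$ to $\R$ is continuous, the preimage
\[
U = \{\bar y \in \Omega_1^* \mid |\lim \bar y - \lim \bar x| < 1\}
\]
is an open neighborhood of $\bar x$ in $\Omega_1^*$. If $\Omega_1^*$ had a basis of clopen sets, there would exist a nonempty clopen set $A \subseteq U$ containing $\bar x$. But then
\[
\{\lim \bar y \mid \bar y \in A\} \subseteq (\lim \bar x - 1, \lim \bar x + 1),
\]
so this set of limits is bounded, contradicting Lemma \ref{lemma5}.

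There is no real obstacle here, since all of the hard work has been done in the previous lemmas; the only point requiring mild care is to make sure the continuity of the limit map on $\Omega_1^*$ is in hand, but this is exactly the content of the easy lemma preceding the ``metrizability'' corollary. Hence the argument reduces to a short contradiction combining continuity of $\lim$ with Lemma \ref{lemma5}.
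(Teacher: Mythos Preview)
Your proof is correct and follows essentially the same approach as the paper: both use the continuity of $\bar x \mapsto \lim \bar x$ to exhibit a nonempty open set on which the limits are bounded, and then invoke Lemma~\ref{lemma5} to conclude that this open set contains no nonempty clopen subset. The paper's version is terser (it simply cites $\{\bar x \in \Omega_1^\ast \mid 0 < \lim \bar x < 1\}$ as such an open set), but the argument is identical.
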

\begin{proof}
Since there are open sets, e.~g.~ $\{\bar x \in \Omega_1^\ast \mid 0 < \lim \bar x < 1\}$, that do not satisfy the conclusion of Lemma \ref{lemma5}, there are nonempty open sets that  have no nonempty clopen subsets.\end{proof}
\begin{cor}\label{corollary.vital} For $i \in \{1,2\}$ there is no continuous function $\phi:\Omega_i^* \rightarrow \N^\N$ such that $(\phi(\bar x) , \bar x) \in \Omega_i$ for each $\bar x \in \Omega_i^*$.
\end{cor}
\begin{proof}
Assume that there is a continuous $\phi$ with the mentioned property. We will obtain a contradiction by showing that $\Omega_i^*$ is zero-dimensional.

Let $O \subseteq \Omega_i^*$ be open and let $\bar x \in O$. Then there is an open set $U \subseteq \Omega_i$ such that $$(f, \bar y) \in U \Leftrightarrow \bar y \in O$$ for all $(f,\bar y) \in \Omega_i$.
The Baire topology is zero-dimensional, so let $B \subseteq U$ be clopen such that $$(\phi(\bar x),\bar x) \in B.$$
Then $C \subseteq O$ defined by
$$\bar y \in C \Leftrightarrow (\phi(\bar y) , \bar y) \in B$$ is a clopen subset of $O$ containing $\bar x$.\end{proof}
\subsection{The proofs}\label{RPT-proof}
\begin{thm}
The statement {\bf pC} $\Rightarrow$ {\bf C} is not realizable in the extensional model of continuous functionals.
\end{thm}
\begin{proof}
The proof will be contrapositive, we assume that the statement is realizable and obtain a contradiction to  Corollary \ref{corollary.vital}.

Assume that $M$ is continuous such that for every $\bar x \in
\Omega_2^*$ and every $F$ that realizes that $\bar x$ is partially
Cauchy we have that $M(F,\bar x)$ is a modulus for $\bar x$. Given
$\bar x$, $g \geq id$ and $n \in \N$, we let $$F_{\bar x}(g )(n) = \mu
k.\forall m \geq k (diam\{x_m , \ldots , x_{g(m)}\} <
2^{-n}).$$\medskip

\noindent{\em Claim.} The map $\bar x \mapsto F_{\bar x}$ is continuous.
\medskip

\noindent\emph{Proof of Claim.}
It suffices to prove that the map $(f,\bar x) \mapsto F_{\bar x}$ is continuous on $\Omega_2$.

Given $\bar x$, a modulus $f$ for $\bar x$, $g \geq id$  and $n$ it is easy to find, in a continuous way, an upper bound for $F_{\bar x}(g)(n)$, and from this upper bound, we can find the exact value. The claim follows.

Using the claim, we let $\phi(\bar x) = M(F_{\bar x}, \bar x)$, contradicting Corollary \ref{corollary.vital} for $i = 2$.
This proves the theorem  by contradiction.
\end{proof}
The refutation of {\bf RPT} in our model follows the same pattern, but is a bit more elaborate. We need a few lemmas before we can contradict Corollary \ref{corollary.vital} from the assumption that {\bf RPT} is realizable. 

The rest of this section is a proof of
\begin{thm}\label{theoremRPT} The Riemann Permutation Theorem cannot be realized in the extensional realizability model induced by $K_2$.
\end{thm}
The key technical lemma will be:

\begin{lemma}\label{lemma8}
Let $\bar x$ and $\bar b$ be  sequences of non-negative rational numbers such that $x_i > 0$ for infinitely many $i \in \N$.
Uniformly computable in $\bar x$ and $\bar b$ we can find
\begin{itemize}
\item[-] $k_i \geq 1$ for each $i \in \N$
\item[-] $y_{i,j} \in \Q$ for each $i \in \N$ and $j$ with $1 \leq j \leq k_i$, 
\end{itemize}
defining $B$ to be the set $$B = \{(i,j)\mid i \in \N \wedge 1 \leq j \leq k_i\},$$
such that 
\begin{itemize}
\item[-] $x_i = \sum_{j = 1}^{k_i}|y_{i,j}|$ for each $i \in \N$.
\item[-]If $A \subseteq B$ is finite and $n \in \N$ then
$$\left |\sum_{(i,j) \in B \setminus A}y_{i,j}\right | \neq b_n,$$
\end{itemize}
where we use the lexicographical ordering of $B$ in viewing $\sum_{(i,j) \in B \setminus A}y_{i,j}$ as a series.
The inequalities are considered to be fulfilled when $\sum_{(i,j) \in B}y_{i,j}$ is diverging.
\end{lemma}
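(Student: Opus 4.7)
The plan is to make the entire signed series $\sum_{(i,j) \in B} y_{i,j}$ (in lexicographic order) either diverge or converge to a specific computable irrational number. Since $\bar b$ consists of rationals and every finite subsum $\sum_A y_{i,j}$ is rational, divergence makes the inequalities vacuous (by the stated convention) while convergence to an irrational $S$ forces $|S - \sum_A y_{i,j}|$ to be irrational and thus distinct from every $b_n \in \Q$. Notice that $\bar b$ plays no role in the construction; it enters only through the rationality of its values.

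For $x_i = 0$ take $k_i = 1$ and $y_{i,1} = 0$; for $x_i > 0$ take $k_i = 2$ and $y_{i,1} = a_i \geq 0$, $y_{i,2} = -(x_i - a_i) \leq 0$ with $a_i \in \Q \cap [0,x_i]$, so that $|y_{i,1}| + |y_{i,2}| = x_i$ holds by construction. The net contribution of block $i$ to the running sum $P_i = \sum_{j<i}(y_{j,1}+y_{j,2})$ is $c_i = 2a_i - x_i$, freely prescribable within $\Q \cap [-x_i,x_i]$. Let $j^*$ be the least index with $x_{j^*} > 0$ and take the computable irrational target $\alpha = x_{j^*}/\sqrt{2}$, which satisfies $0 < \alpha < x_{j^*} \leq \sum_i x_i$. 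Choose the $c_i$ greedily: at each stage $i$ with $x_i > 0$, move $P_i$ toward $\alpha$ by at most $x_i$, using a dyadic rational approximation at a scale $2^{-N_i}$, where $N_i$ is computed by looking ahead in $\bar x$ to the next index $i' > i$ with $x_{i'} > 0$ and ensuring $2^{-N_i} < x_{i'}/2$.

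The concluding analysis splits in two. If $x_i \not\to 0$, greedy tracking forces $c_i$ to be small once $P_i$ is close to $\alpha$, so $a_i = (x_i + c_i)/2$ is close to $x_i/2$ and bounded below on a subsequence; the intermediate lexicographic partial sums $P_i + a_i$ and the block-ending partial sums $P_{i+1}$ then have distinct accumulation points, so $\sum y_{i,j}$ diverges and the condition holds vacuously. If $x_i \to 0$, the lookahead guarantees that the tracker enters fine mode at every positive-$x$ stage, driving $|P_i - \alpha|$ to $0$, and $a_i \to 0$ as well, so $\sum y_{i,j}$ converges to $\alpha$. For every finite $A \subseteq B$, $\sum_{B \setminus A} y_{i,j} = \alpha - \sum_A y_{i,j}$ is an irrational minus a rational, hence irrational, so its absolute value differs from every rational $b_n$.

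The main technical obstacle is orchestrating the rounding scales $N_i$ so that the step constraint $|c_i| \leq x_i$ is respected while $|P_i - \alpha| \to 0$, since some $x_i$ can be arbitrarily small compared to the current tracking error. The bounded lookahead handles this: choosing $N_i$ with $2^{-N_i} < x_{i'}/2$ ensures that at the next positive-$x$ stage $i'$ we have $|\alpha - P_{i'}| < x_{i'}$, so a fine tracking step with $|c_{i'}| \leq |\alpha - P_{i'}| < x_{i'}$ is available. This same bounded lookahead makes the whole construction uniformly computable in $\bar x$, and since $\bar b$ is never consulted it is trivially uniform in $\bar b$ as well.
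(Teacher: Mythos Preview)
Your approach is correct and takes a genuinely different route from the paper's. The paper builds the $y_{i,j}$ by a priority-style construction: at each stage $s$ with $x_s>0$ it splits $x_s$ into an odd number $k_s$ of equal pieces with alternating signs, choosing $k_s$ large enough that the net contribution $x_s/k_s$ does not disturb a growing finite family of explicit \emph{protections} $r_{(A,n)}>0$ guaranteeing $\bigl|\,|\sum_{B_s\setminus A}y_{i,j}|-b_n\,\bigr|>r_{(A,n)}$ for every pair $(A,n)$ seen so far. Your idea is instead to steer the whole series $\sum y_{i,j}$ to a fixed computable irrational $\alpha$ (or let it diverge); since every finite subsum is rational, every tail $\sum_{B\setminus A}y_{i,j}=\alpha-\sum_A y_{i,j}$ is irrational and automatically misses every rational $b_n$. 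This dispenses with the protection bookkeeping, keeps $k_i\le 2$, and never consults $\bar b$ at all. The paper's construction does buy explicit positive lower bounds for the distance to each $b_n$, but only the bare inequality is actually used downstream.

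One small gap: your divergence argument in the case $x_i\not\to 0$ is phrased via ``distinct accumulation points'' of $P_i+a_i$ and $P_{i+1}$, which tacitly assumes the block-end sums $P_i$ settle near $\alpha$. With only the lookahead bound $2^{-N_i}<x_{i'}/2$ this can fail when $x_{i'}$ is itself large. The repair is immediate and does not touch the rest of your construction: since $|y_{i,1}|+|y_{i,2}|=x_i$ with $y_{i,1}\ge 0\ge y_{i,2}$, one has $\max(|y_{i,1}|,|y_{i,2}|)\ge x_i/2$, so if $x_i\not\to 0$ the terms of the series do not tend to $0$ and divergence is automatic. (Alternatively, strengthen the lookahead to $2^{-N_i}<\min(x_{i'}/2,\,2^{-i})$, which forces $P_i\to\alpha$ unconditionally and makes your original phrasing go through.)
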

\begin{proof}
We will construct $k_s$ and $y_{s,1} , \ldots , y_{s,k_s}$ by recursion in \emph{stages} $s$. 
During the construction, we  let $$B_s = \{(i,j) \mid i < s \wedge 1 \leq j \leq k_i\},$$ ordered lexicographically. 
We let $R_s$ be the set of pairs $(A,n)$ where $A \subseteq B_s$ and $n \leq s$. This defines an increasing family of finite sets, and $R = \bigcup_{s \in \N}R_s$ will at the end of the construction consist of all pairs $(A,n)$ where $A \subset B$ is finite and $n \in \N$.

During the construction, we will set up a \emph{protection} $r_{(A,n)}$ for each $(A,n) \in R$. A protection will be a positive rational number, and when it is set up , it will be kept through the rest of the construction.  During the construction we will ensure that when there is a protection $r_{(A,n)}$ for $(A,n)$ at the end of stage $s$, then 
$$(\dagger)\hspace{5mm}\left | \; \left | \sum_{(i,j) \in B_{s+1} \setminus A}y_{i,j}\;\right |  - b_n\;\right | > r_{(A,n)}\;,$$
which will in turn secure that at the end of the construction 
$$\left | \;\left | \sum_{(i,j) \in B \setminus A}y_{i,j}\;\right | - b_n \;\right | \geq r_{(A,n)} > 0.$$
We will also ensure that at the end of a stage $s$ where $x_s > 0$, all $(A,n) \in R_s$ have protections.

Let us now go to the details of the construction.
If $x_s = 0$, we let $k_s = 1$ and $y_{s,1} = 0$, and we move on to the next stage.

So assume that $x_s > 0$ and let $(A,n) \in R_s$. There are three possibilities:
\begin{enumerate}
\item There is a protection $r_{(A,n)}$ for $(A,n)$ at the beginning of stage $s$ .
\item There is no protection for $(A,n)$ at the beginning of stage $s$  and $$\left |\;\sum_{(i,j)\in B_s \setminus A}y_{i,j}\; \right| \neq b_n.$$
\item There is no protection for $(A,n)$ at the beginning of stage $s$,  and $$\left |\;\sum_{(i,j)\in B_s \setminus A}y_{i,j}\; \right | = b_n.$$
\end{enumerate}
Observe that (1) will be the case exactly when $(A,n) \in R_t$ for some $t < s$ where $x_t > 0$.

For $(A,n)$ in Case (2), we let  $$r_{(A,n)} = \frac{1}{2} \left |\;\left |\sum_{(i,j)\in B_s \setminus A}y_{i,j}\;\right | - b_n\;\right | $$ be the protection. Then $(\dagger)$ will hold for this $(A,n)$ at this stage of the stage.

Now, we will find $y_{s,1} , \ldots , y_{s,k_s}$ such that 
\begin{itemize}
\item[-] $x_s = \sum_{j = 1}^{k_s}|y_{s,j}|.$
\item[-] For $(A,n)$ in cases (1) and (2), $(\dagger)$ will be preserved if we extend our series with $y_{s,1}, \ldots , y_{s,k_s}$. 

Notice that $B_{s+1}$ will be $B_s$ end-extended with $(s,1) , \ldots , (s,k_s)$ and that $A \subseteq B_s$, so we may do so by ensuring that the absolute value of $$\sum_{j = 1}^{k_s}y_{s,j}$$ is smaller than the difference between $r_n$ and $$\left | \; \left | \sum_{(i,j) \in B_{s} \setminus A}y_{i,j}\;\right |  - b_n\;\right | $$ and then apply the triangle inequality.
\end{itemize}

\noindent Let $t$ be the minimal value of $$\left | \;\left |  \sum_{(i,j) \in B_s \setminus A}y_{i,j}  \;\right | -  b_n\;\right |- r_{(A,n)}$$ for all $(A,n) \in R_s$ having a protection after we added those from case (2).
Let $k_s$ be an uneven number such that $$\frac{x_s}{k_s} < \frac{t}{2}.$$
We let $y_{s,j} = (-1)^{j+1}\frac{x_s}{k_s}$ for $1 \leq j \leq k_s$. Then $$\sum_{j = 1}^{k_s}y_{s,j} = \frac{x_s}{k_s}\;,$$
and the property $(\dagger)$ will be preserved.

If we are in Case (3) for  $(A,n) \in R_s$, we let $r_{(A,n)} = \frac{x_s}{2k_s}$.
Since in this case$$ \left | \; \left | \sum_{(i,j) \in B_{s+1} \setminus A}y_{i,j}\;\right |  - b_n\;\right | = \frac{x_s}{k_s}$$  this will also satisfy $(\dagger)$ after this stage.  
Then the construction proceeds to the next stage. 

The properties required by the lemma are secured in this construction, so the proof is complete.
\end{proof}
We now let $\bar a$ be a convergent, increasing sequence of non-negative rational numbers that is not almost constant, and let 
\begin{itemize}
\item[-] $x_0 = a_0$
\item[-] $x_{i+1} = a_{i+1} - a_i$
\end{itemize}
Let $\bar z$ be the sequence $$y_{0,1} , \ldots , y_{0,k_0},y_{1,1}, \ldots , y_{1,k_1}, \ldots$$
as constructed from $\bar x$ in the proof of Lemma \ref{lemma8}, where we use $b_n = 2^{-n}$.

We know that $$\sum_{i \in \N}|z_i| = \lim_{j \rinf }a_j,$$ so the series is absolutely convergent.
\begin{defi} Let $p:\N \rightarrow \N$ be a permutation of $\N$.
Let $F_{\bar a}(p,n)$ be the least $m$ such that $$\left |\sum_{k = i}^j z_{p(k)}\right | < 2^{-n}$$ whenever $m \leq i \leq j$.
\end{defi}
\begin{lemma}\label{lemma9}
$F_{\bar a}(p,n)$ is uniformly computable from $\bar a$, $p$, $n$ and an arbitrary modulus function $f$ for $\bar a$.
\end{lemma}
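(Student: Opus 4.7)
The plan is to exploit the absolute convergence $\sum_{l} |z_l| = \lim \bar a$ from Lemma \ref{lemma8} together with the modulus $f$. The key observation is that if $N$ satisfies $\sum_{l \geq N}|z_l| < 2^{-n-1}$ and $m$ is chosen so that $p$ maps $\{k : k \geq m\}$ into $\{l : l \geq N\}$, then for every $m \leq i \leq j$,
\[
\left|\sum_{k=i}^{j} z_{p(k)}\right| \;\leq\; \sum_{k=i}^{j} |z_{p(k)}| \;\leq\; \sum_{l \geq N}|z_l| \;<\; 2^{-n-1} \;<\; 2^{-n}.
\]
Hence computing an upper bound for $F_{\bar a}(p,n)$ reduces to computing such an $N$ from the tail of $\bar a$ and then computing $m$ from $p$ and $N$.

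For the first, let $N_0 := f(n+1)$. Since $\bar a$ is increasing with Cauchy modulus $f$, $\lim \bar a - a_{N_0} \leq 2^{-n-1}$. By the construction in Lemma \ref{lemma8}, the numbers $k_i$ and the $y_{i,j}$ are computable from $\bar x$, and $a_{N_0} = \sum_{i \leq N_0}\sum_{j=1}^{k_i}|y_{i,j}|$, so $N := \sum_{i \leq N_0} k_i$ is a computable index into $\bar z$ satisfying $\sum_{l \geq N}|z_l| < 2^{-n-1}$. For the second, since $p$ is a permutation, scanning $p(0), p(1), \ldots$ we eventually observe each of $0,1,\ldots,N-1$; letting $m_0$ be one plus the largest index at which such a value appears, we have $p(k) \geq N$ for all $k \geq m_0$. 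Both $N$ and $m_0$ are uniformly computable from $\bar a$, $f$, $n$, $p$, and the display above shows $F_{\bar a}(p,n) \leq m_0$.

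To pass from this upper bound to the exact least $m$, I would test each $m \leq m_0$ by dovetailing two semi-decidable searches. The first enumerates pairs $(i,j)$ with $m \leq i \leq j$ looking for $|\sum_{k=i}^j z_{p(k)}| \geq 2^{-n}$, certifying invalidity via a decidable rational comparison. The second increases a parameter $c$ and, re-running the previous construction with $n+c$ in place of $n+1$, produces $M_c$ and $N_c$ with $\sum_{l \geq N_c}|z_l| < 2^{-n-c}$ and $p(k) \geq N_c$ for $k \geq M_c$; it then checks the decidable finite inequality $\max\{|\sum_{k=i}^{j} z_{p(k)}| : m \leq i \leq j \leq M_c\} + 2^{-n-c} < 2^{-n}$, which certifies validity, since pairs with $j > M_c$ are absorbed by the tail estimate. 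Exactly one of the two searches terminates: if the true supremum $S$ of $|\sum_{k=i}^{j} z_{p(k)}|$ over $m \leq i \leq j$ is $< 2^{-n}$, then for $c$ large enough $2^{-n-c} < 2^{-n} - S$ and the validity check succeeds; otherwise some witness is found. The delicate step is translating the absolute tail bound on $\bar z$ into a bound on arbitrary windows of the rearranged series $\bar z \circ p$, which is precisely where the bookkeeping with the preimage under $p$ enters.
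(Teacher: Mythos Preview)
Your upper-bound computation is fine, and your dovetailing scheme is essentially the same as the paper's Case~I/Case~II search. The gap is in your termination argument for the validity branch. You assert that ``exactly one of the two searches terminates'' by splitting on whether the supremum $S=\sup\{\,|\sum_{k=i}^{j} z_{p(k)}|: m\le i\le j\,\}$ is $<2^{-n}$ or not, claiming ``otherwise some witness is found.'' But if $S=2^{-n}$ with the supremum not attained, every finite window has absolute value strictly below $2^{-n}$ (so the invalidity search never halts), yet for every $c$ the finite maximum over $[m,M_c]$ can exceed $2^{-n}-2^{-n-c}$ (so the validity check never succeeds either). A concrete instance: for $z_l=2^{-l-2}$ and $p=\mathrm{id}$, the partial sums $\sum_{k=0}^{j}z_k=2^{-1}-2^{-j-2}$ are all $<2^{-1}$ with supremum exactly $2^{-1}$; your algorithm at $m=0,\ n=1$ never terminates.

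This is precisely the point at which the special construction of Lemma~\ref{lemma8} is needed, and you never invoke it. The $\bar z$ in play is not an arbitrary absolutely convergent series: it was built so that $\bigl|\sum_{(i,j)\in B\setminus A} y_{i,j}\bigr|\neq 2^{-n}$ for every finite $A$ and every $n$. Since the series is absolutely convergent, every tail $\sum_{k\ge m} z_{p(k)}$ equals such a sum with $A=p(\{0,\dots,m-1\})$, hence is $\neq 2^{-n}$. From this one deduces (as the paper does) that in the valid case there is a rational $r>0$ with all windows $<2^{-n}-r$, i.e.\ $S<2^{-n}$ strictly; only then does your validity search provably halt. Once you insert this appeal to Lemma~\ref{lemma8}, your argument and the paper's coincide.
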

\begin{proof}
It is of course important that $\bar z$ has the property of the conclusion in Lemma \ref{lemma8} with respect to $\{2^{-n}\}_{n \in \N}$.
Since $$\sum_{k = 0}^\infty z_k$$ is absolute convergent, we know that the sum $$\sum_{k = 0}^\infty z_{p(k)}$$ is independent of $p$, so for any $m$, $n$ and permutation $p$ we have that
$$\left |\sum_{k > m} z_{p(k)}\;\right | \neq 2^{-n}.$$
Given $m$, $n$, $p$ and $f$ we will show that we can effectively distinguish between the two cases
\begin{enumerate}[label=\uppercase{\roman*}]
\item[I] $\exists i \geq m \exists j \geq i (|\sum_{k = i}^j z_{p(k)}| \geq 2^{-n})$
\item[II] otherwise.
\end{enumerate}
In Case I, this is simply witnessed by $i$, $j$ and a finite fragment of $p$, we do not rely on $f$ or Lemma \ref{lemma8} in this case.

If we are in Case II, we know, by Lemma \ref{lemma8}, that $$\left |\sum_{k \geq m}z_{p(k)}\;\right | < 2^{-n}.$$
(The sum cannot be larger than $2^{-n}$, since we then would be in Case I.)
Then there is a rational number $r > 0$ such that for all $i \geq m$ and all $j \geq i$ we have that
$$\left |\sum_{k = i}^j z_k\;\right | < 2^{-n} - r.$$
Let $n_0$ be such that $2^{-n_0} < r$  and use $f$ to find $n_1$ such that $$\sum_{k \geq n_1}\left |z_k\;\right |< 2^{-n_0}.$$
Let $p_0$ be an initial segment of $p$ such that all $y_{i,j}$ where $i < n_1$ are in the range of $p_0$, and let $k_0$ be the length of the initial segment $p_0$.

We need the number $r$, which is not available from our data unless we already know that we are in Case II, in order to carry out this explicit construction, but the consequence we will make use of is that if we are in Case II, there are numbers $n_0$, $n_1 = f(n_0)$ and $k_0$  such that we effectively from $f$, $p$ restricted to $k_0$  and the rest of the data know that
\begin{enumerate}
\item If $m \leq i \leq j < k_0$ then $$\left |\sum_{k = i}^j z_{p(k)}\;\right | + 2^{-n_0} < 2^{-n}.$$
\item $$\sum_{k \geq k_0}| z_{p(k)}| < 2^{-n_0}.$$
\end{enumerate}
From the existence of $n_0$, $n_1$ and $k_0$ with the properties given above, we can deduce that we are in Case II.

We can then, uniformly in $n$, $f$, $p$ and $m$ search for a witness, either for Case I or for Case II, and we are bound to find one. Thus we can, uniformly in $p$, $f$ and $m$ decide between the two cases.

We use the fact that we can split between the two cases in order to compute $F_{\bar a}(p,n)$, where $F_{\bar a}(p,n)$ is the least $m$ such that we are in Case II.
This proves the Lemma.\end{proof}
\begin{lemma}
Realizability of {\bf RPT} contradicts Corollary \ref{corollary.vital} with $i = 1$.
\end{lemma}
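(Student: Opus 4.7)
The plan is to mirror the preceding proof of $\textbf{pC} \Rightarrow \textbf{C}$. Assume for contradiction that a continuous $M$ realizes $\textbf{RPT}$, and use it to construct a continuous $\phi : \Omega_1^\ast \to \N^\N$ with $(\phi(\bar a), \bar a) \in \Omega_1$ for every $\bar a \in \Omega_1^\ast$, directly contradicting Corollary~\ref{corollary.vital} for $i = 1$.

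Given $\bar a \in \Omega_1^\ast$ I would first form $\bar x$ by $x_0 = a_0$ and $x_{i+1} = a_{i+1} - a_i$, a sequence of non-negative rationals depending continuously on $\bar a$. Feed $\bar x$ into Lemma~\ref{lemma8} with $b_n = 2^{-n}$ to obtain $\bar z = \bar z_{\bar a}$; since stage $s$ of that construction inspects only $x_0, \ldots, x_s$, the assignment $\bar a \mapsto \bar z_{\bar a}$ is continuous. By construction, $\sum_i |z_i^{\bar a}| = \lim \bar a$, and the partial sum of the first $k_0 + \cdots + k_s$ terms of $|\bar z_{\bar a}|$ equals $a_s$. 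Consequently, any modulus for Cauchyness of the partial sums of $|\bar z_{\bar a}|$ yields a modulus for $\bar a$ after a reindexing computable from the sequence $\{k_i\}_{i\in\N}$, which is itself computable from $\bar a$.

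To apply $M$ I would use $F_{\bar a}$ from Lemma~\ref{lemma9} as a realizer of the hypothesis of $\textbf{RPT}$ on $\bar z_{\bar a}$. That lemma says $F_{\bar a}(p,n)$ is uniformly computable from $\bar a$, $p$, $n$ and \emph{any} modulus $f$ of $\bar a$, so $(f, \bar a) \mapsto F_{\bar a}$ is continuous from $\Omega_1$ into the space of realizers. The crucial observation is that for fixed $\bar a$ the extensional functional $F_{\bar a}$ depends only on $\bar a$: different $f$'s produce different $K_2$-names for the same functional. Hence by extensionality of $M$, the map $\psi(f, \bar a) := M(\bar z_{\bar a}, F_{\bar a})$ depends only on $\bar a$, so it descends to a continuous $\phi_0 : \Omega_1^\ast \to \N^\N$ giving a modulus for $\sum |z_i^{\bar a}|$. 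Composing $\phi_0$ with the $k_i$-reindexing produces the required $\phi$.

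The main obstacle is the hypothesis of Lemma~\ref{lemma8} that $x_i > 0$ for infinitely many $i$, which fails precisely for those eventually constant $\bar a \in \Omega_1^\ast$ whose status cannot be detected continuously in $\bar a$. A uniform remedy is to modify the Lemma~\ref{lemma8} construction by replacing the trivial contribution $y_{s,1} = 0$ used when $x_s = 0$ by a short telescoping block of rationals summing to $0$ and small enough not to violate any protection $r_{(A,n)}$ already installed. One must then re-verify that the separating property $|\sum_{(i,j) \in B \setminus A} y_{i,j}| \neq b_n$ and the uniform computation in Lemma~\ref{lemma9} both survive this modification; the rest is routine continuity bookkeeping through the constructions of $\bar z_{\bar a}$, $F_{\bar a}$ and the reindexing.
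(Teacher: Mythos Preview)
Your argument is correct and follows essentially the same route as the paper: form $\bar x$ from the differences of $\bar a$, build $\bar z$ via Lemma~\ref{lemma8}, observe that Lemma~\ref{lemma9} makes $F_{\bar a}$ continuous in $(f,\bar a)\in\Omega_1$ while extensionally independent of $f$, apply the assumed realizer $M$ to obtain a modulus for $\sum|z_i|$, and translate back to a modulus for $\bar a$; the resulting map factors through $\Omega_1^\ast$ and contradicts Corollary~\ref{corollary.vital}. The paper's proof is terser but identical in structure.

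You are in fact more careful than the paper about the hypothesis ``$x_i>0$ infinitely often'' in Lemma~\ref{lemma8}: the paper silently restricts to non--almost-constant $\bar a$ without saying why this suffices. One small caution about your proposed remedy: inserting a nontrivial telescoping block at a stage with $x_s=0$ destroys the identity $\sum_j|y_{s,j}|=x_s$, so $\sum_i|z_i|$ would no longer equal $\lim\bar a$. This is harmless provided you make the added absolute values decay (say bounded by $2^{-s}$), so that a modulus for $\sum|z_i|$ still yields one for $\bar a$ after an easy adjustment; but it is worth stating explicitly when you ``re-verify'' the reindexing step.
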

\begin{proof}
Assume That {\bf RPT} can be realized. 
Let $\bar a$ be an increasing, convergent sequence of non-negative rational numbers, and let $\bar x$, $\bar z$ and $F_{\bar a}$ be as in the proofs of Lemmas \ref{lemma19} and \ref{lemma9}.

$F_{\bar a}$ is uniformly continuous in $\bar a$ and any modulus $f$ for $\bar a$, but independent of the choice of $f$. By the assumption, we can continuously find a modulus $g$ for $$\sum_{k = 0}^{\infty }|z_k|$$ from $F_{\bar a}$.
From $g$ we can continuously find a modulus $f_g$ for $\bar a$, and $f_g$ will be independent of $f$. This shows that Corollary \ref{corollary.vital} fails.\end{proof}
\vspace{2mm}
We have indeed proved Corollary \ref{corollary.vital}, so this also completes the proof of Theorem \ref{theoremRPT}

\section{Conclusion}\label{Sec5}
We have decided the truth value of three weakly non-constructive statements in one fixed realizability model. $K_2$ is of course an important example, but it is still just one example, so the amount of information we can draw from our results will be limited.

One obvious consequence is that {\bf RPT} is not deducible,  with constructive means, from any of the facts realized by the model, including the fan theorem, bar induction and dependent choice, or in any formal theory with an extensional realizability model based on $K_2$.

Our analysis of the anti-Specker property is really by introducing the compactness base. A compactness base is, loosely spoken, a realizer of the fact that the space is compact, and we reduced the question of closure under products for anti-Specker spaces to closure under products of compact spaces for this model. Since the model realizes the fan theorem, it should not come as a surprise that the equivalence between being compact and being anti-Specker can be established.

The natural question now is if it is possible to prove this closure under products from  principles known to be realized by extensional $K_2$, such as the fan theorem and bar recursion. We leave this problem for future investigations.

\end{document}